\documentclass[11pt]{article} %

\usepackage[margin=1in]{geometry}
\usepackage{natbib}
\usepackage{amsthm, amsmath, amssymb}

\usepackage{hyperref}
\usepackage{cleveref}

\newtheorem{theorem}{Theorem}
\newtheorem*{theorem*}{Theorem}
\newtheorem{proposition}[theorem]{Proposition}
\newtheorem*{proposition*}{Proposition}

\newtheorem*{claim*}{Claim}

\newtheorem*{observation*}{Observation}

\newtheorem{lemma}[theorem]{Lemma}
\newtheorem*{lemma*}{Lemma}

\newtheorem*{exercise*}{Exercise}
\newtheorem{corollary}[theorem]{Corollary}

\theoremstyle{definition}
\newtheorem{definition}[theorem]{Definition}

\title{
Sketching Intersection Profiles: \\
A Simple Proof and Three Applications
}

\usepackage{xcolor}         %
\usepackage{graphicx} %
\usepackage{amsmath}
\usepackage{amssymb}
\usepackage{nicefrac}
\usepackage{enumitem}
\usepackage{xspace}

\usepackage{thm-restate}

\DeclareMathOperator{\cut}{cut}

\DeclareMathOperator*{\argmax}{arg\,max}
\DeclareMathOperator*{\E}{E}

\DeclareMathOperator{\Dirichlet}{Dirichlet}
\DeclareMathOperator{\Beta}{Beta}

\renewcommand{\epsilon}{\varepsilon}

\newcommand{\norm}[1]{\left\|#1\right\|}

\newcommand{\IP}[1]{{F_{#1}}}

\newenvironment{proofof}[1]{%
  \begin{proof}[\noindent\textbf{\emph{Proof of #1}}]\ }{%
  \end{proof}%
}

\definecolor{Gred}{RGB}{219, 50, 54}
\definecolor{Ggreen}{RGB}{60, 186, 84}
\definecolor{Gblue}{RGB}{72, 133, 237}
\definecolor{Gyellow}{RGB}{204, 204, 0}
\definecolor{Gbrown}{RGB}{135, 64, 1}
\providecommand{\Comments}{1}

\ifnum\Comments=1
\usepackage[colorinlistoftodos,prependcaption,textsize=scriptsize]{todonotes}
\else
\usepackage[disable]{todonotes}
\fi
\newcommand{\mytodo}[1]{\ifnum\Comments=1{#1}\fi}

\author{Flavio Chierichetti\thanks{Reddit, San Francisco. \texttt{flavio.chierichetti@gmail.com}}
\hspace*{1.1cm}
Mirko Giacchini\thanks{Sapienza University of  Rome. \texttt{\{giacchini, ale, tani\}@di.uniroma1.it}}
\hspace*{0.9cm}
Ravi Kumar\thanks{Google, Mountain View.  \texttt{\{ravi.k53, atomkins\}@gmail.com}}
\\
Alessandro Panconesi\footnotemark[2]
\hspace*{1.2cm}
Erasmo Tani\footnotemark[2]
\hspace*{0.7cm}
Andrew Tomkins\footnotemark[3]
}

\date{}

\begin{document}

\maketitle

\begin{abstract}

In this work we settle the complexity of three sketching problems. (i) We show that sketching vertex neighborhood sizes in graphs requires $\Omega(n^2)$ bits, standing in sharp contrast to the $\tilde{O}(n)$ complexity of sketching edge cuts. (ii) We obtain tight lower and upper bounds of $\tilde{\Theta}(n^2)$ for sketching coverage functions with additive and multiplicative errors. (iii) We prove an $\Omega(n^2)$ lower bound for sketching Random Utility Models  under the $\ell_\infty$-norm, improving upon the previous $\Omega(n \log n)$ bound and matching a known upper bound to within logarithmic factors. 

These bounds are obtained through a connection with the problem of sketching the \emph{intersection profile} of a distribution $D$ on $2^{[n]}$. Specifically, we seek a succinct data structure that, for any query set $S \subseteq [n]$, approximates the quantity $\Pr_{T \sim D}[T \cap S \neq \varnothing]$ to within a small constant additive error.
One can obtain lower bounds for this latter problem directly from known results about the \emph{itemset frequency estimation problem} in databases for which tight bounds are known. 
As an additional contribution, we also provide an alternative  proof for the intersection profile sketching lower bound, in the setting in which the accuracy parameter is constant. This proof relies solely on elementary probability avoiding the heavier machinery used in previous proofs.
\end{abstract}

\thispagestyle{empty}
\setcounter{page}{0}
\newpage

\section{Introduction}
Sketching high-dimensional objects into succinct data structures is a fundamental primitive in algorithm design, enabling efficient processing of massive datasets in learning and optimization.  
A central theme in this area is to study the space complexity of the data structure: how many bits are necessary and sufficient to (approximately) answer queries about the underlying object? 

The main contribution of this work is to prove simple sketching lower bounds for three problems: (i) vertex neighborhoods of a graph, (ii)  coverage functions, and (iii) random utility models. All the lower bounds are also matched (up to polylog factors) by easy upper bounds.

These lower bounds follow from space complexity lower bounds for
approximating the intersection profiles of distributions over subsets of a universe $[n] = \{1, \dots, n\}$. Given a distribution $D$ over subsets of $[n]$, its \emph{intersection profile} is the function defined by $F_{D}(S) = \Pr_{T \sim D}[T \cap S \neq \varnothing]$, for $S\subseteq [n]$. The goal is to construct a succinct data structure, using which, for any query set $S$, one can compute $F_{D}(S)$ within a small additive error. This problem can equivalently be described in terms of the \emph{itemset frequency estimation} problem.  Here, one is given a binary matrix (i.e., a database) of $n$ columns (i.e., attributes) and $R$ rows. An itemset is a set $T\subseteq [n]$ of attributes and its frequency is the fraction of rows having a 1 in all columns $T$. Since $T\cap S= \varnothing$ if and only if $T \subseteq [n]\setminus S$, an additive sketch for frequency estimation translates into one for intersection profiles and vice-versa.\footnote{Note that in intersection profiles we allow the distribution $D$ to be arbitrary, and not necessarily uniform as in itemset frequency estimation, but assuming that $D$ is uniform over its support incurs only an extra additive error of $\epsilon$ in its estimate (see \Cref{thm:upper-bound-intersection-profile}).} Therefore, using known bounds for sketching itemset frequencies \citep{lmtu16}, we immediately obtain lower bounds for sketching intersection profiles.

In addition, we provide an alternative proof for a special case of the sketching lower bound of itemset frequencies using an elementary probabilistic packing argument, which is arguably simpler than the machinery employed to obtain the tight bounds of \citet{lmtu16}.

\subsection{A Simple Proof for Intersection Profiles}

We prove that any sketch that approximates the intersection profile of an arbitrary distribution over subsets of $[n]$ to within a constant additive error needs $\Omega(n^2)$ bits. Here, the error is measured as the maximum discrepancy between $F_D(S)$ and its estimate, over the choice of the set $S$ (the $\ell_\infty$-norm of the $F_D$ vector). 

This is a special case of the itemset frequency estimation result of \citet{lmtu16} in that we only allow constant $\epsilon$ and we focus the problem of estimating all subsets rather than all subsets of size $k$.  To compare, \citet{lmtu16} obtain an $\Omega \left(\frac{n\cdot k}{\epsilon^2} \log \frac{n}{k} \right)$ lower bound for estimating all subsets of size $k$ within error of $\epsilon$.

Our lower bound is established via a packing argument involving a family of distributions defined by random binary matrices, which we show are pairwise far apart in the $\ell_\infty$-distance. The technique of proving lower bounds via packing arguments is standard in sketching, but our construction involving the embedding of random matrices to separate intersection profiles is, to the best of our knowledge, novel to this setting. %

The technical difficulty in showing the lower bound arises from the space of distributions over subsets, which is highly non-uniform.  Indeed, for a distribution that is sampled uniformly at random from the simplex, the probability of intersecting any fixed $S$ is concentrated around $1 - 2^{-|S|}$ (see \Cref{sec:typical-set}). Consequently, for most instances, a constant number of bits suffices to approximate the profile, since, on input $S$ one can simply return $1 - 2^{-|S|}$ as an estimate for ${F}_D(S)$ and it will be correct with high probability. However, we show that worst-case instances define a geometry that is much harder to capture.  These instances occupy thin, low-measure regions of the parameter space, and their geometric ``thinness,'' along with the non-uniform structure of the space, are the main obstacles in constructing the lower bound. In particular, standard volumetric arguments fail, as random instances do not capture the true geometry of the underlying space.

\subsection{Applications}
We use the lower bound for sketching intersection profiles to resolve open questions and tighten bounds for three key applications:

\begin{enumerate}[label=(\roman*), wide=0pt]
\item \emph{Vertex neighborhoods.} It is known that one can sketch the edge cut function of a graph within multiplicative error $\varepsilon$ with $O(n\log n /\varepsilon^2)$ bits, which is optimal \citep{ckst19}. It is natural to ask if a similar result is possible for \emph{vertex} boundaries or equivalently, neighborhood cardinalities. We define the vertex neighborhood sketching problem as the task of generating sketches to approximate the size of the neighborhood for any subset $S \subseteq V$. 
We prove that, unlike edge cuts, vertex boundaries fundamentally \emph{resist}  sparsification: any data structure preserving neighborhood sizes up to an additive $0.01 n$ error requires $\Omega(n^2)$ bits, meaning that exactly representing the graph, using, e.g.,\ its adjacency matrix, is optimal. We show our result holds both for the problem of sketching vertex neighborhoods, which include all neighbors of elements of $S$, and for that of sketching outer vertex neighborhoods (a.k.a., outer vertex boundaries), which only count the neighbors that are \textit{not} part of $S$ itself.

\item \emph{Coverage functions.} Coverage functions are an important subclass of submodular functions used extensively in optimization and data mining. A coverage function $f(A) = \frac{1}{m} |\bigcup_{i \in A} S_i|$ measures the (normalized) size of the union of subsets of $[m]$ indexed by $A$. We show that our lower bound for sketching intersection profiles  extends to coverage functions, implying that $\Omega(n^2)$ bits are necessary for constant additive error sketches (and hence for $(1\pm\epsilon)$-multiplicative error as well), which is tight. 
Furthermore, for $(1 \pm \epsilon)$-multiplicative sketching, we extend an algorithm of \cite{bem17} to provide a sketching using $\tilde{O}(n^2/\epsilon^2)$ bits, matching our lower bound up to logarithmic factors.

\item \emph{Random Utility Models (RUMs)}. RUMs are distributions over permutations of a universe of $n$ elements that are used to model preferences in economics and machine learning (see, e.g.,\ \citep{t09}). For a non-empty subset $S\subseteq [n]$ (also called a ``slate'') and a permutation $\pi$ sampled from the RUM, the winner in $S$ is the item ranked highest in $S$ according to $\pi$. For each $i\in S\subseteq [n]$, the \emph{winning probability} of $i$ in $S$ is the probability that $i$ wins in $S$ under a permutation drawn from the RUM. Therefore, for each slate $S$, a RUM induces a \emph{winning distribution} over the items of $S$. 

To sketch the winning distributions of a RUM, we seek a data structure that, for each slate $S$, returns an estimate of the winning distribution over $S$. The error in the sketch is taken to be the worst-case distance (over the choice of $S$) between the true winning distribution and the estimate obtained from the sketch. This distance can either be measured in $\ell_1$, as twice the total variation distance, or in $\ell_\infty$, as the maximum error over the individual probability of any item in $S$.

\cite{chierichetti2021light} showed that sketching the winning distributions of a RUM under the $\ell_1$-distance requires $\Omega(n^2)$ bits, and that $O(n^2 \log n)$ bits suffice. However, for the $\ell_\infty$-distance---which seeks to estimate the winning probability of specific items in a slate---the best known lower bound was the trivial $\Omega(n \log n)$ bits, which are needed to even sketch a single permutation.  We improve this to $\Omega(n^2)$ via a reduction from our sketching intersection profiles problem. 

Interestingly, our lower bound also applies to the simpler, related problem of sketching all winning probabilities of a \emph{single} item in the universe, for which this bound is tight.
\end{enumerate}
\section{Related Work}

\paragraph{Sketching itemset frequencies.} Given a binary matrix of $n$ columns and $R$ rows, an itemset is defined as a set $T\subseteq[n]$ and its frequency is the fraction of rows where all the columns $T$ contain a 1. \citet{lmtu16} showed that any sketch which is able to additively approximate the frequency of any itemset of size $k$ must require $\Omega(\frac{nk}{\epsilon^2} \log(n/k))$ bits. For the case where we want to estimate all itemsets, this translates into a $\Omega(n^2/\epsilon^2)$ lower bound. This lower bound immediately implies an analogous lower bound for the sketching of intersection profiles of power set distributions. For convenience, we used this second abstraction to obtain lower bounds for the other applications. 

The $\Omega(\frac{nk}{\epsilon^2} \log(n/k))$ lower bound of \citet{lmtu16} builds on spectral random matrix theory, and they also have a simpler bound of $\Omega(nk\log(n/k))$ which uses a VC-dimension argument together with an error-correcting code. Our argument provides an arguably simpler proof, albeit for a weaker version of the lower bound.

\paragraph{Sketching edge cuts.}
A long line of work has focused on the problem of producing small representations of graphs that multiplicatively approximate the edge-cut function (i.e., the number of edges cut) for all input cuts~\citep{k93,bk96,ss08, st11,alo15,ls18}.
\cite{bss12} obtained sketches for edge cuts using $O({n\log n\over \varepsilon^2})$ bits; this is also known to be optimal \citep{ckst19}.
It is a natural question to seek similar-sized sketches for the vertex boundary function.
We will answer this question in the negative, showing that $\Omega(n^2)$ bits are required for a multiplicative error. 

\paragraph{Sketching coverage functions.} Coverage functions are a special case of submodular functions, which have attracted great interest in learning theory \citep{b13,fk14,ch15,yhydyys21} and optimization \citep{bh18}. Moreover, optimization problems on coverage functions entail classical algorithmic problems such as max-cover and set-cover as special cases. From the sketching perspective, \citet{bdfknr12} showed that coverage functions can be sketched to within a multiplicative $(1\pm \epsilon)$ error with bit complexity $\tilde{O}(n^3 /\epsilon^2)$, and \citet{bem17} provided a sketch with $\tilde{O}(n k)$ bits for the special case of $k$-cover. It is possible to extend the latter to obtain a sketch with $\tilde{O}(n^2)$ bits for general coverage functions. Randomized sketches minimizing the expected squared error have also been studied \citep{yz19}---we do not consider this error in our paper. To the best of our knowledge, no non-trivial lower bound for sketching coverage functions was known in the literature.

\paragraph{Relationship to $\ell_1$-distance bounds for Random Utility Models.}

Our alternative lower bound proof shares the high-level packing argument used for the 
$\ell_1$-distance sketching lower bound
by~\citet{chierichetti2021light} for Random Utility Models—generating $\exp(n^2)$ instances and
proving they are pairwise far apart—there is an important technical difference in how the separation is shown. The separation for the $\ell_1$ result uses a fixed class of $\Theta(n)$ \emph{test sets}, guaranteeing that any pair of RUMs will behave differently on at least one of these fixed subsets of the universe. In contrast, for the $\ell_\infty$-distance, fixing $m$ test sets \emph{a priori} allows the distribution to be
approximated with only $O(n \log(n)\log(m))$ bits---which precludes an
$\Omega(n^2)$ lower bound unless $m$ is exponentially large. Hence, our proof requires choosing the $\Theta(n)$ test sets adaptively based on the structure of the sampled instances to guarantee the necessary separation probability.

\section{Background: Sketching and Intersection Profiles}

We begin by reviewing the notion of sketching. In the typical sketching setting, one is interested in producing a succinct representation of an object from a fixed class in a way that allows them to approximately answer certain questions of interests about the object by looking solely at this representation. As is customary, given sets $A$ and $B$ we will denote by $A^B$ the collection of functions from $B$ to $A$. We will reserve the notation $2^A$ for the power set of $A$: the collection of all subsets of $A$. For a set $A$ and integer $k$, we denote with $\binom{A}{k}$ all the subsets of $A$ of cardinality $k$. For a proposition $P$, we will let $[P]$ be 1 if $P$ is true and 0 otherwise.

\begin{definition}[Sketch]
    Let $\mathcal{F} \subseteq \mathbb{R}^{Q}$ be a class of real-valued functions on a set $Q$. A \emph{sketch} for $\mathcal{F}$ is a pair $(\phi,\psi)$ of functions with $\phi:\mathcal{F} \to \{0,1\}^{\kappa}$ and $\psi:\{0,1\}^{\kappa}\times Q \to \mathbb{R}$, where $\kappa \in \mathbb{N}$ is a parameter called the \emph{bit complexity} of the sketch.  A sketch $(\phi,\psi)$ is \emph{$\varepsilon$-multiplicative} if it satisfies:
    \[
       \forall f\in \mathcal{F}, \forall q\in Q: \;  |\psi(\phi(f),q)-f(q)|\leq \varepsilon \cdot |f(q)|.
    \]  
    A sketch $(\phi,\psi)$ is \emph{$\varepsilon$-additive} if it satisfies:
    \[
        \forall f\in \mathcal{F}, \forall q\in Q: \; |\psi(\phi(f),q)-f(q)|\leq \varepsilon.
    \]
\end{definition}
Intuitively, given a function $f$, one wants to produce a short bit string $\phi(f)$ that allows them to approximate the value of $f(q)$ for every possible element $q\in Q$ (sometimes referred to as a \emph{query}).

We will refer to a probability distribution supported on $2^{[n]}$ as a \emph{power-set distribution} on $[n]$, or simply a \emph{power-set distribution}, and we will denote by $\mathcal{D}_n$ the collection of all power-set distributions on $[n]$. In this paper, we obtain sketching lower bounds leveraging on results for intersection profiles of power-set distributions, defined as follows. 
\begin{definition}[Intersection Profile]
For a power-set distribution $D \in \mathcal{D}_n$, its \emph{intersection profile} is the function $\IP{D} \in [0,1]^{2^{[n]}}$ defined as:
\[
\IP{D}(S) = \Pr_{T\sim D}[T\cap S \neq \varnothing].
\]
\end{definition}
Let us also denote with $F_{D,k}\in[0,1]^{\binom{[n]}{k}}$ the restriction of $F_D$ to subsets of size $k$. We will leverage on the following results of \citet{lmtu16}, which concern the problem of sketching the class of intersection profiles of power-set distributions: $\mathcal{F}_n = \{F_D(\cdot) \mid D\in \mathcal{D}_n\}$ and $\mathcal{F}_{n,k} = \{F_{D,k}(\cdot) \mid D\in \mathcal{D}_n\}$.

For convenience, for two power-set distributions $D, D'$ and $k\geq 1$, we define the distance $d_k(D,D')$ based on the $\ell_\infty$-distance between their intersection profiles:
\[
    d_k(D, D):= \norm{F_{D,k} - F_{D',k}}_\infty = \max_{S\in \binom{[n]}{k}} |F_D(S) - F_{D'}(S)|.
\]
We also define: 
\[
d_\infty(D, D') := \norm{\IP{D} - \IP{D'}}_{\infty} = \max_{S \subseteq [n]} |\IP{D}(S) - \IP{D'}(S)|.
\]
From the upper bound side we have:
\begin{theorem}[Lemma 9 of \citet{lmtu16}, paraphrased]\label{thm:upper-bound-intersection-profile} For each $\epsilon > 0$ and for every power-set distribution $D$, there is a power-set distribution $D'$ that is uniform\footnote{The support of $D'$ can be a multiset.} over a support of size $O\left(n/\epsilon^{2}\right)$ (resp., $O(\frac{1}{\epsilon^2} \cdot \log\binom{n}{k})$) and such that $d_\infty(D, D') \leq \epsilon$ (resp., $d_{k}(D,D')\leq \epsilon$).

Since each set requires $O(n)$ bit to be stored, we have that there exists an $\epsilon$-additive sketch for $\mathcal{F}_n$ (resp., $\mathcal{F}_{n,k}$) of bit complexity $\kappa=O(n^2/\epsilon^2)$ (resp., $\kappa=O(n\cdot k\cdot \log(\frac{n}{k})/\epsilon^2)$).
\end{theorem}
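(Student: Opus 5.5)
The plan is to use sampling plus a uniform-convergence (union bound) argument. Fix $D$ and draw $T_1,\dots,T_m$ i.i.d.\ from $D$. Let $D'$ be the uniform distribution on the multiset $\{T_1,\dots,T_m\}$; this is why the support is allowed to be a multiset. For a fixed query $S\subseteq[n]$, the quantity $F_{D'}(S)=\frac{1}{m}\sum_{i=1}^m [T_i\cap S\neq\varnothing]$ is an average of $m$ i.i.d.\ indicator variables, each with mean $F_D(S)$. Hoeffding's inequality then gives
\[
\Pr\bigl[|F_{D'}(S)-F_D(S)|>\epsilon\bigr]\le 2e^{-2m\epsilon^2}.
\]

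Next I would take a union bound over the query class. For $d_\infty$ there are $2^n$ queries, so the failure probability is at most $2^{n+1}e^{-2m\epsilon^2}$. This is strictly less than $1$ once $m> \frac{(n+1)\ln 2}{2\epsilon^2}$, so $m=O(n/\epsilon^2)$ suffices. For $d_k$ there are only $\binom{n}{k}$ queries, so $m=O\bigl(\frac{1}{\epsilon^2}\log\binom{n}{k}\bigr)$ suffices (taking $m\ge 1$ in degenerate cases). Since the failure probability is below $1$, some realization of the samples satisfies all query bounds at once. That realization is the required $D'$; the argument is existential, via the probabilistic method.

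For the sketch, $\phi(F_D)$ will encode the list $T_1,\dots,T_m$ of a good realization, using $n$ bits per set as a characteristic vector. The decoder $\psi$ evaluates $F_{D'}(S)$ directly from this list. The cost is $mn=O(n^2/\epsilon^2)$ bits, or $O\bigl(\frac{n}{\epsilon^2}\log\binom nk\bigr)=O\bigl(nk\log(n/k)/\epsilon^2\bigr)$ bits using $\log\binom nk\le k\log(en/k)$.

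One subtlety: $\phi$ must be defined on $F_D$, not on $D$. So $\phi$ picks a canonical $D$ realizing the profile; any choice works, because the guarantee depends only on $F_D$. I expect no step to be a real obstacle. The only point needing care is that the constants in the union bound give the stated asymptotics, together with this well-definedness detail.
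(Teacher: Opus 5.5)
Your proposal is correct and is essentially the argument behind this statement. The paper does not reprove it; it cites the uniform row-sampling upper bound of \citet{lmtu16}, and its own proof of \Cref{thm:coverage-function-sketching-ub-additive} follows the same template (random sample, Hoeffding, union bound over the $2^n$ or $\binom{n}{k}$ queries, with $t=\lceil (n+2)\ln 2/(2\epsilon^2)\rceil$ or $t=\lceil \ln(4\binom{n}{k})/(2\epsilon^2)\rceil$). One step you should make explicit is the final conversion $k\log(en/k)=O(k\log(n/k))$: it needs $k\le cn$ for some constant $c<1$, such as the regime $k\le n/2$ used throughout the paper, and the statement itself also assumes this implicitly.
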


Our main tool to prove lower bounds for the three applications is the following lower bound for sketching intersection profiles:

\begin{theorem}[Theorem 16 of \citet{lmtu16}, paraphrased]\label{thm:set-intersection-sketching-lower-bound}
    Let $3\leq k\leq n/2$ and $\epsilon \in \left(\frac{1}{\sqrt{n}}, 1\right)$. Then, any $\epsilon$-additive sketch for $\mathcal{F}_{n,k}$ has bit complexity $\Omega(n \cdot k \cdot \log(\frac{n}{k}) / \epsilon^2)$.

    By choosing $k=\Theta(n)$, this implies a lower bound of $\Omega(n^2/\epsilon^2)$ for any $\epsilon$-additive sketch of $\mathcal{F}_{n}$.
\end{theorem}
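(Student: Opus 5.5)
The statement is imported from \citet{lmtu16}, so the first task is a translation, not a new lower bound. Given a binary matrix with $R$ rows and $n$ columns, let $D$ be the uniform distribution over the row-complements $T_r = [n]\setminus \mathrm{row}_r$. For $S\subseteq[n]$, the itemset $S$ is contained in row $r$ exactly when $T_r\cap S=\varnothing$. Hence
\[
\mathrm{freq}(S) = 1 - F_D(S),
\]
and an $\epsilon$-additive sketch for $\mathcal{F}_{n,k}$ yields an $\epsilon$-additive sketch for itemset frequencies of size-$k$ itemsets with the same bit complexity, by outputting $1-\psi(\phi(F_D),S)$. The hard instances of \citet{lmtu16} are uniform distributions over rows, so they lie inside $\mathcal{D}_n$. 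Their $\Omega(nk\log(n/k)/\epsilon^2)$ bound, valid in the stated range $3\le k\le n/2$ and $\epsilon>1/\sqrt n$, then transfers directly to $\mathcal{F}_{n,k}$.

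The only care needed is that their theorem is stated for some fixed convention, for example exact versus up-to-$\epsilon$ frequencies, or a fixed number of rows $R$. I would check that their packing or communication argument uses a family of databases whose frequency vectors are pairwise $>2\epsilon$ apart. Such a family maps injectively to intersection profiles that are pairwise $>2\epsilon$ apart in $d_k$, so any $\epsilon$-additive sketch must give the members distinct sketches. This forces $\kappa\ge\log_2$ of the family size.

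For the second claim, take $k=\lfloor n/2\rfloor$, which satisfies $3\le k\le n/2$ for $n\ge 6$. Then $\log(n/k)=\Theta(1)$, and the bound becomes $\Omega(n^2/\epsilon^2)$ for $\mathcal{F}_{n,k}$. An $\epsilon$-additive sketch of $\mathcal{F}_n$ restricts to an $\epsilon$-additive sketch of $\mathcal{F}_{n,k}$ by answering only queries of size $k$. Distinct distributions with different $F_{D,k}$ necessarily have different $F_D$, so the map $F_D\mapsto F_{D,k}$ is well defined and the lower bound carries over to $\mathcal{F}_n$.

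The main obstacle is purely bookkeeping. We must verify that the parameter regime and error conventions of \citet{lmtu16} match the ones here, including additive error on all size-$k$ queries simultaneously, deterministic sketches, and the range of $\epsilon$. If their statement is for randomized sketches or for "for-each" guarantees, the deterministic for-all setting here is only harder, so the bound still applies.
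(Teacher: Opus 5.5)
Your proposal is correct and takes essentially the same approach as the paper. The paper also does not reprove this result: it imports it from \citet{lmtu16} through the complementation identity $\mathrm{freq}(S)=1-F_D(S)$, with $D$ uniform over row complements, and gets the $\mathcal{F}_n$ bound by taking $k=\Theta(n)$ and restricting queries to size $k$. One minor wording point in the restriction step: what you actually use is that every element of $\mathcal{F}_{n,k}$ has some preimage in $\mathcal{F}_n$ (equivalently, $d_\infty\ge d_k$), not that $F_D\mapsto F_{D,k}$ is well defined, but this does not affect the argument.
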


\section{A Simple Lower Bound for Sketching $\mathcal{F}_n$}\label{sec:main-results}

In this section, we provide a simple alternative proof for the lower bound on the bit complexity of additive sketches for $\mathcal{F}_n$ in the special case where $\epsilon$ is a constant. Specifically, we give a self-contained proof of the following:

\begin{theorem}\label{thm:simple-lower-bound}
    Any $0.05$-additive sketch for $\mathcal{F}_n$ has bit complexity $\Omega(n^2)$.
\end{theorem}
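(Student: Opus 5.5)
We will prove the bound with a packing argument. The goal is to exhibit a family $\mathcal{M}$ of power-set distributions with $|\mathcal{M}| = 2^{\Omega(n^2)}$ such that $d_\infty(D,D') > 0.1$ for all distinct $D,D'\in\mathcal{M}$. Given such a family, a $0.05$-additive sketch must satisfy $\phi(F_D)\neq\phi(F_{D'})$ for every distinct pair. Otherwise some query $S$ with $|F_D(S)-F_{D'}(S)|>0.1$ would receive a single answer $\psi$ that is within $0.05$ of both values, which is impossible. Hence $\kappa \ge \log_2|\mathcal{M}| = \Omega(n^2)$.

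\textbf{Construction.} Split $[n]$ into two halves $A$ and $B$, each of size $m=n/2$. Take a random binary $m\times m$ matrix $M$ and let $D_M$ be the uniform distribution over the $m$ sets $T_i = \{a_i\}\cup\{b_j : M_{ij}=1\}$, one for each row $i$. The row coordinate $a_i$ identifies the row; the $B$-part of $T_i$ encodes the row itself.

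For a query we use $S = R\cup B'$ with $R\subseteq A$ and $B'\subseteq B$. Then
\[
1-F_{D_M}(S) = \frac{1}{m}\,\#\{\, i\notin R : \text{row } i \text{ of } M \text{ avoids } B' \,\}.
\]
To make avoiding $B'$ a reasonably likely event, either bias the entries so that $\Pr[M_{ij}=1]=p=\Theta(1/m)$, or use queries with $|B'|=O(1)$. I would choose the cleanest option: rows are i.i.d.\ uniform, and each query is restricted to a constant number $c$ of columns $B'$ together with a complement set $R$. Then each row avoids $B'$ with probability $2^{-c}$, a constant.

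\textbf{Separation, chosen adaptively.} Fix two independent random matrices $M$ and $M'$. We want a query $S$ on which the two profiles differ by more than $0.1$, except with probability $2^{-\Omega(n^2)}$. A union bound over the $|\mathcal{M}|^2$ pairs then shows that a random family of size $2^{\Omega(n^2)}$ works. As the related-work discussion warns, a fixed query cannot succeed, so $S$ must depend on $M$ and $M'$.

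Fix a constant $c$. For each $c$-subset $B'$, let $I=I(B')$ be the set of rows of $M$ avoiding $B'$, and let $I'=I'(B')$ be the corresponding set for $M'$. Choosing $R = A\setminus\{a_i : i\in I\setminus I'\}$ retains exactly the rows in $I\setminus I'$, which gives
\[
F_{D_{M'}}(S)-F_{D_M}(S) = \frac{|I\setminus I'|}{m}.
\]
So it suffices to find a $B'$ with $|I\setminus I'| \ge 0.1m$. For a single $B'$, $|I\setminus I'|$ is a sum of $m$ independent indicators, each with mean $2^{-c}(1-2^{-c})$. Choosing $c=2$ gives mean $3/16 > 0.1$, so one $B'$ fails with probability only $e^{-\Omega(m)}$.

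\textbf{The main obstacle} is boosting $e^{-\Omega(m)}$ to $e^{-\Omega(m^2)}$. To do this I would take $\Omega(m)$ disjoint pairs of columns $B'_1,\dots,B'_{m/2}$. The events ``$B'_t$ fails'' depend on disjoint column sets, and the columns are independent across $t$, so these events are mutually independent. The probability that every $B'_t$ fails is therefore at most $(e^{-\Omega(m)})^{m/2} = e^{-\Omega(m^2)}$. A union bound over the $2^{\Omega(n^2)}$ pairs (with a small enough constant in the exponent) then leaves positive probability that all pairs are separated, so such a family exists.

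The delicate points are two. First, each $B'_t$ must fail with probability at most $e^{-\alpha m}$ for an explicit $\alpha$, which a Chernoff bound supplies. Second, the constants must be balanced so that the family size $2^{\beta n^2}$ satisfies $\beta < \alpha/8$ after converting between $m$ and $n$.
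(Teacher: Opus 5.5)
Your proposal is correct and is essentially the paper's own argument. It uses the same identity-plus-random-matrix distributions and one query per column group whose row-identifier part is chosen adaptively so that one profile equals $1$. It then uses mutual independence across disjoint column groups to turn $e^{-\Omega(m)}$ into $e^{-\Omega(m^2)}$, followed by a union bound over pairs. Taking $c=1$ (mean $1/4$) reproduces the paper's query $T_j^{(B)}$ in effect, so pairs of columns are unnecessary. The only thing to tidy is strictness: declare a group failed when $|I\setminus I'|\le 0.1m$, so that success gives $d_\infty>0.1$, which the $0.05$ pigeonhole step requires.
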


The main idea required to prove this result is to construct $2^{\Omega(n^2)}$ power-set distributions that are pairwise far apart. In particular, the key technical step is the following result.

\begin{theorem}\label{thm:set-intersection-ell-infty-packing}
    For every $n$, there exists a collection $D_1, \ldots, D_K$ of $K = K(n) = 2^{\Omega(n^2)}$ power-set distributions such that 
    $
        d_\infty (D_i , D_j) > 0.1
    $
    for each $\{i, j\} \in \binom{[K]}2$.
\end{theorem}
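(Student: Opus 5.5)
The approach is a probabilistic packing argument. Each instance is built from a random $n \times m$ binary matrix $M$ with $m = \Theta(n)$, say $m = n$, and i.i.d.\ uniform entries. Row $i$ of $M$ corresponds to element $i \in [n]$, and column $j$ defines a set $T_j = \{ i : M_{ij} = 1\}$. We let $D_M$ be uniform over the multiset $\{T_1,\dots,T_m\}$, so that $F_{D_M}(S)$ is the fraction of columns whose support meets $S$. Uniform sets meet a typical $S$ with probability about $1 - 2^{-|S|}$, which is why random instances are hard to tell apart using fixed queries. The plan is therefore to pick queries adaptively: for two independent matrices $M, M'$, we find a set $S$, depending on $M$, that misses a constant fraction of $M$'s columns but meets most of $M'$'s columns.

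The construction of the query proceeds as follows. Fix a set $J$ of columns, say $|J| = m/2$ of them. We look for $S$ among the rows $i$ with $M_{ij} = 0$ for all $j \in J$, so that $S$ automatically misses every $T_j$ with $j \in J$. With $|J|$ linear this set of rows is empty, so we must work more cleverly. We use a block structure: the ground set is split into $n/2$ "selector" elements and the rest. Alternatively, we make columns sparse, each containing about $c$ ones, so that a set $S$ of size $s = \Theta(n/c)$ can avoid many columns while a random column meets $S$ with probability $1 - (1 - c/n)^s$, which is bounded away from zero.

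Concretely, take $T_j$ to be independent random sets with each element included with probability $p = 1/k$, and let $S$ be a set of size $\approx k$. A random column of $M'$ meets $S$ with probability about $1 - e^{-1} \approx 0.63$. To make $F_{D_M}(S)$ small, we choose $S \subseteq [n] \setminus \bigcup_{j \in J} T_j$ for a set $J$ of $\alpha m$ columns. Since $|\bigcup_{j\in J} T_j| \approx n(1 - e^{-\alpha m/k})$, taking $k = \Theta(m)$ leaves $\Theta(n)$ free elements, enough room to choose $|S| = k$. Then $F_{D_M}(S) \le 1 - \alpha + (\text{fraction of the other columns meeting } S)$, and we need that fraction to be noticeably below $0.63$. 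This is where adaptivity matters. Instead of fixing $J$, we greedily choose $S$ so that it avoids as many of $M$'s columns as possible, for example by picking $k$ elements from the rows of $M$ with few ones among a chosen column subset. We then compare with $F_{D_{M'}}(S)$.

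For $M'$ independent of $M$, and hence of $S$, Chernoff plus independence of columns gives $F_{D_{M'}}(S) \ge 0.63 - 0.01$ except with probability $e^{-\Omega(m)}$. That probability is too weak on its own, since we need a failure probability of $2^{-\Omega(n^2)}$. To get it, we use $\Theta(n)$ disjoint test sets $S_1,\dots,S_t$ built from disjoint row blocks or disjoint column groups of $M$. The events of $M'$ behaving well on them are independent, so the failure probability becomes $e^{-\Omega(m t)} = 2^{-\Omega(n^2)}$. A union bound over $K^2$ pairs with $K = 2^{c n^2}$ and $c$ small then yields the collection, since any pair $M, M'$ is separated by more than $0.1$ on some $S_\ell$.

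I expect the main obstacle to be the deterministic guarantee for $M$ itself: for every $M$ outside a set of small probability, we need $\Theta(n)$ test sets, each avoiding substantially more of $M$'s columns than a random $S$ would, with gap at least $0.1$. The cleanest route I would try first is to reserve a planted structure. Split $[n]$ into halves $A$ and $B$. Elements of $A$ carry the random bits, $n/2 \times m$ entries giving $2^{\Omega(n^2)}$ choices. Elements of $B$ are used as indicators so that, for each column group $g$, a designated query $S_g$ (its $A$-part determined by $M$'s bits in group $g$) is avoided by exactly the group-$g$ columns of $M$, while a column of an independent $M'$ avoids it only with probability about $1/2$. The remaining risk is showing that this gap survives after normalization by $m$. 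It does provided the groups have constant fraction size, which conflicts with having $\Theta(n)$ independent groups; the balancing of these two requirements is the step I would need to work out.
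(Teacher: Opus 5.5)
There is a genuine gap. You correctly identify what is needed: $\Theta(n)$ queries, each separating the two instances by more than $0.1$, whose outcomes on the second instance are independent. None of your routes delivers this, as you acknowledge at the end.

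The sparse route is ruled out outright. With column density $1/k$ and $k=\Theta(m)=\Theta(n)$, a matrix carries only $O(n\log n)$ bits of entropy. So two independent instances coincide with probability at least $2^{-O(n\log n)}$, and a union bound over pairs cannot reach $K=2^{\Omega(n^2)}$.

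In the planted route the conflict is built in. As you describe it, the sets outside group $g$ are made to meet $S_g$ through their indicator elements, and they do so in \emph{both} instances. Hence the two profiles can differ only on the group-$g$ sets, and the gap on $S_g$ is at most $|g|/m$. A gap above $0.1$ then allows only $O(1)$ disjoint groups. Seeking independence through disjoint test sets or disjoint groups of sets is the wrong axis.

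The missing idea, which is the paper's construction, has two parts. First, let each ``group'' be determined by a single random-bit \emph{element} rather than fixed in advance. Second, make the instance's own profile equal to $1$ on its query, rather than small.

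With $m=n/2$, let set $i$ of $D_M$ be $\{j\in[m] : M_{ij}=1\}\cup\{m+i\}$. For each $j\in[m]$, let $T_j=\{j\}\cup\{m+i : M_{ij}=0\}$. Every set of $D_M$ meets $T_j$, either via $j$ or via its own indicator, so $F_{D_M}(T_j)=1$. Set $i$ of an independent $D_{M'}$ misses $T_j$ exactly when $M_{ij}=1$ and $M'_{ij}=0$, i.e.\ with probability $1/4$, independently over $i$.

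So the gap is spread over all $m$ sets, and Hoeffding gives $\Pr[F_{D_{M'}}(T_j)\ge 0.9]\le e^{-9n/400}$. Since $F_{D_{M'}}(T_j)$ depends only on column $j$ of $M$ and $M'$, these $m$ events are mutually independent, even though the queries share indicator elements. This gives failure probability $e^{-9n^2/800}$ per pair, and the union bound yields $K=2^{\Omega(n^2)}$.

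Note that the groups $\{i : M_{ij}=1\}$ overlap heavily. Independence comes from the bits of distinct elements, not from disjointness, which is exactly the balancing you could not achieve.
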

Before we proceed with its proof, we note that \Cref{thm:simple-lower-bound} follows directly from \Cref{thm:set-intersection-ell-infty-packing} by a standard argument, which we now give for completeness.

\begin{proof}[Proof that \Cref{thm:simple-lower-bound} follows from \Cref{thm:set-intersection-ell-infty-packing}]
    By contradiction, suppose there is a $0.05$-additive sketch $(\phi, \psi)$ for $\mathcal{F}_n$ with bit complexity $\kappa < \log_2 K$, where $K$ is chosen as in the statement of \Cref{thm:set-intersection-ell-infty-packing}.  Now, consider a collection of $K$ set intersection instances $D_1, \ldots, D_K$ with the property guaranteed by \Cref{thm:set-intersection-ell-infty-packing}.  By the pigeonhole principle, there exist $D_i$ and $D_j$ with $i \neq j$ such that $\phi(D_i) = \phi(D_j)$.  Also, since $d_\infty(D_i,D_j) > 0.1$ there exists some subset $T\subseteq [n]$ such that $|{F_{D_i}}(T) - {F_{D_j}}(T)| >0.1.$
    
    On the other hand, by the triangle inequality, 
    \begin{align*}
        |{F_{D_i}}(T) - {F_{D_j}}(T)| 
        &\leq |{F_{D_i}}(T) -\psi(\phi(D_i),T)| +|\psi(\phi(D_i),T) - {F_{D_j}}(T)| \\
        &= | F_{D_i}(T) -\psi(\phi(D_i),T)| +|\psi(\phi(D_j),T) - {F_{D_j}}(T)| \\
        &\leq 0.05 + 0.05 = 0.1, \mbox{ a contradiction.}  
    \end{align*}
    Hence, every $0.05$-additive sketch for $\mathcal{F}_n$ must have bit complexity $\kappa \geq \log_2 K = \Omega(n^2)$, completing the proof.
\end{proof}

We now prove Theorem~\ref{thm:set-intersection-ell-infty-packing}. We do so via a probabilistic packing argument, constructing a family of distributions that are pairwise far apart in the $d_\infty(\cdot, \cdot)$-metric. The construction embeds a random binary matrix into the problem by partitioning the universe items into one of two types: \emph{high-degree} items and \emph{low-degree} items. High-degree items appear in roughly half the sets of our power-set distribution (i.e., they have high degree in the items-sets bipartite graph) and encode the high-entropy random bits of the matrix. Low-degree items appear in exactly one set each (i.e., they have low degree in the items-sets bipartite graph), and help identify the set uniquely. A distribution consisting only of high-degree items could be sketched with few bits because, with high probability, for all sets $S \subseteq [n]$, the set $S$ will have empty intersection with a sampled set with probability $\approx 2^{-|S|}$. On the other hand, a distribution consisting only of low-degree items could be sketched using $O(n \log n)$ bits, since each set could be represented exactly.

To prove our $\Omega(n^2)$ lower bound, we leverage the interplay of the high-degree and low-degree items.
The two types of items together make it possible to create $\Theta(n)$ queries such that, for the sketch to answer each of those queries, it has to hold a very good approximation of the neighborhoods of the high-degree items in the items-sets bipartite graph.

\begin{figure}[t]
\centering
\[
\left[M \;|\;  I_{m}\right] = \left(
\begin{array}{cccc|cccc}
M_{1,1} & M_{1,2} & \cdots & M_{1,m} & 1 & 0  & \cdots & 0  \\
M_{2,1} & M_{2,2} & \cdots & M_{2,m} & 0 & 1 & \cdots & 0 \\
\vdots  & \vdots & \ddots & \vdots & \vdots & \vdots & \ddots  & \vdots \\
M_{m,1} & M_{m,2} & \cdots & M_{m,m} & 0 & 0 & \cdots & 1
\end{array}
\right)
\]
\[
    [I_m|\mathbf{1}\mathbf{1}^\top - M^{\top}] = \left(
\begin{array}{cccc|cccc}
1 & 0  & \cdots & 0 & 1-M_{1,1} & 1-M_{2,1} & \cdots & 1-M_{m,1} \\
0 & 1 & \cdots & 0 & 1-M_{1,2} & 1-M_{2,2} & \cdots & 1-M_{2,m} \\
\vdots  & \vdots & \ddots & \vdots & \vdots & \vdots & \ddots  & \vdots \\
0 & 0 & \cdots & 1 & 1-M_{1,m} & 1-M_{2,m} & \cdots & 1-M_{m,m}
\end{array}
\right)
\]

\caption{First, the $m \times n$ matrix $\left[M \;|\;  I_{m}\right]$ with $M_{i,j} \sim \text{Bernoulli}(1/2)$. Each row $i$ of this matrix is the indicator vector of the set $S_i^{(M)}$. The power-set distribution $D_M$ associated with the matrix $M$ selects a subset by sampling a row of this matrix uniformly at random, and interpreting the row as the indicator of the corresponding subset of $[n] = [2m]$. Below that, the matrix $[I_m|\mathbf{1}\mathbf{1}^\top - M^{\top}]$. Each row $j$ of this matrix, is the indicator of the query set $T_j^{(M)}$}
\label{fig:Bernoulli}
\end{figure}

\begin{proof}[Proof of \Cref{thm:set-intersection-ell-infty-packing}]
For simplicity, we assume $n$ is even; let $m = n/2$.  Each distribution is associated with a random binary matrix $M \in \{0, 1\}^{m \times m}$, where each entry $M_{i,j} \sim \mathrm{Bernoulli}(1/2)$, and is independent of all other entries. The matrix $M$ is used to define two quantities: (i) the distribution $D_M$ and (ii) a family of queries.  Formally, for any given $M \in \{0,1\}^{m\times m}$, we let $D_M$ be the uniform distribution on $m$ sets $S_1^{(M)} \ldots, S_m^{(M)} \subseteq [n]$ that, intuitively, correspond to the rows of $M$ paired with an identity matrix.  Specifically, for each $i \in [m]$, let
\[ S_i^{(M)} = \{ j\in [m] \mid M_{i,j} = 1 \} \cup \{ m + i \}.
\]
The query family $T_1^{(M)}, \ldots, T_m^{(M)} \subseteq [n]$ is, intuitively, given by the columns of an identity matrix paired with the complement of $M$.  Specifically, for each $j \in [m]$, let
\[
    T_j^{(M)} = \{j\} \cup \{m + i \mid i \in [m] \mbox{ and } M_{i,j} = 0 \}.
\]
We show an example in \Cref{fig:Bernoulli}. We start by showing some key properties of this construction.  

\begin{lemma}\label{lem:AB}
Let $A$ and $B$ be two (not necessarily distinct) matrices in $\{0,1\}^{m\times m}$, and let $S_i^{(A)}$ and $T_j^{(B)}$ be defined as above. For any $i, j \in [m]$, 
$S_i^{(A)} \cap T_j^{(B)} \neq \varnothing$ if and only if 
$A_{i,j} = 1$ or $B_{i,j} = 0$.
\end{lemma}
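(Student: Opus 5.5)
The plan is to compute the intersection $S_i^{(A)} \cap T_j^{(B)}$ directly by splitting the universe $[n]=[2m]$ into its two halves. These are the ``high-degree'' part $[m]$ and the ``low-degree'' part $\{m+1,\ldots,2m\}$. From the definitions, $S_i^{(A)}\cap[m] = \{j' \in [m] \mid A_{i,j'}=1\}$ and $S_i^{(A)}\cap\{m+1,\ldots,2m\} = \{m+i\}$. Likewise, $T_j^{(B)}\cap[m]=\{j\}$ and $T_j^{(B)}\cap\{m+1,\ldots,2m\} = \{m+i' \mid B_{i',j}=0\}$.

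Since the two halves are disjoint, the intersection decomposes as a disjoint union of two pieces. The first piece is $\bigl(S_i^{(A)}\cap[m]\bigr)\cap\{j\}$, which is nonempty exactly when $A_{i,j}=1$. The second piece is $\{m+i\}\cap\{m+i' \mid B_{i',j}=0\}$, which is nonempty exactly when $B_{i,j}=0$. Hence $S_i^{(A)}\cap T_j^{(B)}\neq\varnothing$ holds if and only if at least one of the two pieces is nonempty, that is, if and only if $A_{i,j}=1$ or $B_{i,j}=0$. This gives both directions simultaneously.

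There is no real obstacle. The only care needed is with indexing: the identity blocks contribute exactly one element each, $m+i$ on the $S$ side and $j$ on the $T$ side. So each half of the intersection tests a single matrix entry, namely $A_{i,j}$ on the first half and $B_{i,j}$ on the second. It is also worth noting that the argument never uses any relation between $A$ and $B$, so it applies equally when $A=B$.
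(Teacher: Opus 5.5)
Your proof is correct and is essentially the paper's argument. The paper observes that $S_i^{(A)}$ meets $\{m+1,\dots,2m\}$ only in $m+i$ and $T_j^{(B)}$ meets $[m]$ only in $j$, so $S_i^{(A)} \cap T_j^{(B)} \subseteq \{j, m+i\}$, which is nonempty iff $A_{i,j}=1$ or $B_{i,j}=0$; your half-by-half decomposition is just a slightly more explicit rendering of this.
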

\begin{proofof}{Lemma~\ref{lem:AB}}
Note that the only element of $S_i^{(A)} \cap \{m+1, \dots , 2m\} = m+i$ and $T_j^{(B)} \cap \{1, \dots , m\} = j$ and hence $S_i^{(A)} \cap T_j^{(B)}  \subseteq \{j, m+i\}$. In particular, $S_i^{(A)} \cap T_j^{(B)} $ is non-empty if and only if either $j \in S_i^{(A)}$ or $m+i \in T_j^{(B)}$. By construction, this happens if and only if $A_{i,j} = 1$ or $B_{i,j} = 0$.
\end{proofof}
This leads to two useful consequences.
\begin{corollary}
\label{cor:AB1}
Let $D_A$ be any power-set distribution induced by a binary matrix $A$ as above. Then, for any query $T_j^{(A)}$, we have
$F_{D_A}(T_j^{(A)}) = 1$. 
\end{corollary}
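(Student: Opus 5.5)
The corollary follows by specializing \Cref{lem:AB} to $B = A$. By definition, $D_A$ is the uniform distribution over the $m$ sets $S_1^{(A)}, \ldots, S_m^{(A)}$. Hence, for any query $T_j^{(A)}$,
\[
F_{D_A}(T_j^{(A)}) = \frac{1}{m}\sum_{i=1}^m \bigl[S_i^{(A)} \cap T_j^{(A)} \neq \varnothing\bigr].
\]
It is therefore enough to show that every indicator in this sum equals $1$.

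Fix $i \in [m]$ and apply \Cref{lem:AB} with $B = A$. The lemma allows the two matrices to coincide, so we obtain that $S_i^{(A)} \cap T_j^{(A)} \neq \varnothing$ if and only if $A_{i,j} = 1$ or $A_{i,j} = 0$. Since $A_{i,j} \in \{0,1\}$, this condition always holds.

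Concretely, the intersection is witnessed by the element $j$ when $A_{i,j} = 1$, because then $j \in S_i^{(A)}$. It is witnessed by $m+i$ when $A_{i,j} = 0$, because then $m+i \in T_j^{(A)}$, and $m+i \in S_i^{(A)}$ always.

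Thus every set in the support of $D_A$ meets $T_j^{(A)}$, each indicator is $1$, and the average is $1$. There is no real obstacle here. The only point to check is that \Cref{lem:AB} indeed permits $A = B$, which it explicitly does.
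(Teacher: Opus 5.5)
Your proof is correct and matches the paper's argument: specialize Lemma~\ref{lem:AB} to $B=A$, note that the condition ``$A_{i,j}=1$ or $A_{i,j}=0$'' always holds, and average the indicators over the $m$ equally likely sets in the support of $D_A$. The explicit witnesses $j$ and $m+i$ you give simply unpack the lemma's own proof.
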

\begin{proofof}{Corollary~\ref{cor:AB1}}
Using \Cref{lem:AB} with $B = A$, we obtain $S_i^{(A)} \cap T_j^{(A)} \neq \varnothing$ always.  Hence, 
\[
        F_{D_A}(T_j^{(A)}) = \frac{1}{m} \sum_{i=1}^m \underset{\text{binary indicator}}{\underbrace{\left[ S_i^{(A)} \cap T_j^{(A)} \neq \varnothing \right]}} = 1.
\qedhere
\]
\end{proofof}
\begin{corollary}
\label{cor:AB2}
For any $i, j \in [m]$:
\[
    \Pr_{A, B}[S_i^{(A)} \cap T_j^{(B)} \neq \varnothing] = 3/4,
\]
where the entries of $A$ and $B$ are sampled independently from $\mathrm{Bernoulli}(1/2)$.
\end{corollary}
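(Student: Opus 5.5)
The proof is a direct application of \Cref{lem:AB}. Fix $i, j \in [m]$. By \Cref{lem:AB}, the event $S_i^{(A)} \cap T_j^{(B)} \neq \varnothing$ coincides exactly with the event $\{A_{i,j} = 1\} \cup \{B_{i,j} = 0\}$. This holds for every pair of realizations $(A,B)$, so the two events have the same probability, and the problem reduces to computing the probability of a union of two events about single matrix entries.

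Next we use the complement. The union fails exactly when $A_{i,j} = 0$ and $B_{i,j} = 1$. The matrices $A$ and $B$ are sampled independently, and each entry is $\mathrm{Bernoulli}(1/2)$. Hence
\[
\Pr[A_{i,j}=0 \text{ and } B_{i,j}=1] = \Pr[A_{i,j}=0]\cdot\Pr[B_{i,j}=1] = \tfrac12\cdot\tfrac12 = \tfrac14.
\]
Therefore $\Pr_{A,B}[S_i^{(A)} \cap T_j^{(B)} \neq \varnothing] = 1 - \tfrac14 = \tfrac34$.

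The only point needing care is that the entries $A_{i,j}$ and $B_{i,j}$ are independent. This follows because $A$ and $B$ are drawn independently of each other. No other entries of either matrix matter, since \Cref{lem:AB} shows that the intersection depends only on the $(i,j)$ entries.
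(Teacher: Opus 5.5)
Your proof is correct and follows essentially the same route as the paper. Both use \Cref{lem:AB} to reduce the event to $\{A_{i,j}=1\}\cup\{B_{i,j}=0\}$, then compute the complement's probability as $\Pr[A_{i,j}=0]\cdot\Pr[B_{i,j}=1]=\tfrac14$ using the independence of $A$ and $B$.
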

\begin{proofof}{Corollary~\ref{cor:AB2}}
Using \Cref{lem:AB} and the independence of $A$ and $B$, 
\[
\Pr_{A, B}[S_i^{(A)} \cap T_j^{(B)} = \varnothing] = \Pr_{A, B}[A_{i,j} = 0 \mbox{ and } B_{i,j} = 1] = \frac{1}{2} \cdot \frac{1}{2} = \frac{1}{4}.
\qedhere
\]
\end{proofof}
Let $D_A$ and $D_B$ be two independent instances defined by matrices $A$ and $B$.  Let $j \in [m]$ be a fixed column and let $Y_j$ be the random variable representing the value of the intersection profile of $D_A$ on the query $T_j^{(B)}$ (note that this random variable is measurable in the $\sigma$-algebra generated by the $j$th columns of $A$ and $B$, and its value is determined once $A$ and $B$ are fixed):
    \[
        Y_j = \IP{D_A}(T_j^{(B)}) = \frac{1}{m} \sum_{i=1}^m \left[ S_i^{(A)} \cap T_j^{(B)} \neq \varnothing \right] = \frac{1}{m} \sum_{i=1}^m \left[ A_{i,j} = 1 \mbox{ or } B_{i,j} = 0 \right].
    \]
From this, $Y_j$ is the average of $m = n/2$ i.i.d. Bernoulli random variables of parameter $3/4$ (Corollary~\ref{cor:AB2}).  By a Hoeffding bound (see, e.g.,\ \cite[Theorem 1.1]{d09}), we have:
\[
        \Pr_{A, B}\left[Y_j \ge 0.9 \right] = \Pr_{A, B} \left[ Y_j - \mathbb{E}[Y_j] \ge \frac{3}{20} \right] \leq \exp\left(-\frac{9n}{400} \right).
\]
Note that if $Y_j < 0.9$, then, by \Cref{cor:AB1}, the distance between the intersection profiles on this specific query $T_j^{(B)}$ is:
\[
        |\IP{D_B}(T_j^{(B)}) - \IP{D_A}(T_j^{(B)})| = |1 - Y_j| > 0.1.
\]  
Now, the probability that the distance between $D_A$ and $D_B$ is less than 0.1 on all queries $T_1^{(B)}, \dots, T_m^{(B)}$ is given by:
    \[
        \Pr_{A, B}[d_\infty(D_A, D_B) \le 0.1] 
        \le \prod_{j=1}^m \Pr[Y_j \ge 0.9] 
        \le \prod_{j=1}^m \exp\left(-\frac{9n}{400} \right)
        = \exp \left(- \frac{9n^2}{800} \right),
    \]
where we are crucially using the fact that, by \Cref{lem:AB}, each variable $Y_j$ is a function of the $j$th column of $A$ and $B$ only, and hence all the $Y_j$'s are mutually independent.
    
Applying a union bound over all pairs in a collection of $K$ distributions 
$D_1, \ldots, D_K$ corresponding to matrices $M_1, \ldots, M_K$ sampled independently, we obtain: \[
        \Pr_{M_1,\dots, M_K }\left[\exists \{i,j\} \in \binom{[K]}2 \Bigm| d_\infty(D_{M_i},D_{M_j}) \leq 0.1\right] \leq \binom{K}{2} \exp \left(- \frac{9n^2}{800} \right) < 1,
    \]
    for $K \leq 2^{\left( \frac{9 \log_2 e}{1600} n^2\right)}$, completing the proof.
\end{proof}

\section{Application: Sketching Vertex Neighborhoods}

We now turn to applications of \Cref{thm:set-intersection-sketching-lower-bound}. The first application consists in proving lower bounds on sketching vertex neighborhoods. For any unweighted undirected graph $G=(V,E)$ and any $S \subseteq V$, the \emph{cut set} of $S$ is the set of edges with exactly one endpoint in $S$: $\cut(S) = \{e \mid e \in E \mbox{ and } |e \cap S| = 1\}$. It is known (\cite{bss12,ckst19}) that one can obtain an $\varepsilon$-multiplicative sketch for the function $|\cut(\cdot)|$ with bit complexity $\Theta(n\log n /\varepsilon^2)$ and that this bound is optimal.

Given an undirected graph $G=(V,E)$, for a subset $S \subseteq V$, we define $\Gamma_G(S)$ to be the set of \emph{neighbors} of $S$: $\Gamma_G(S) = \{w \mid w \in V \mbox{ and } \exists v \in S \text{ such that } \{v,w\}\in E\}$.  Observe that the adjacency matrix of the full graph, which requires $\Theta(n^2)$ bits, is a perfect (0-additive) sketch for $|\Gamma_G(\cdot)|$.  In contrast to the (edge-based) cut function, we show by using Theorem~\ref{thm:set-intersection-sketching-lower-bound} that this simple representation is asymptotically optimal.

We will actually show that the bit complexity is $\Omega(n^2)$ even if we restrict ourselves to only sketching bipartite graphs $G=(V_1 \cup V_2, E)$, and we only allow queries $S\subseteq V_1$. Specifically, let $\mathcal{B}_{n,m}$ be the set of all bipartite graphs with $|V_1|=n$ and $|V_2|=m$. %

\begin{theorem}\label{thm:vertexlb}
Let $3\leq k\leq n/2$ and $\epsilon \in \left(\frac{1}{\sqrt{n}}, 1\right)$. There exists $m=\Theta(k\log(\frac{n}{k})/\epsilon^2)$ such that the bit complexity of any $(\epsilon\cdot m)$-additive sketch for $\{|\Gamma_{G}(\cdot)| \mid G\in \mathcal{B}_{n,m}\}$ is $\Omega(n\cdot k \cdot \log(\frac{n}{k})/\epsilon^2)$, even if the domain of $\Gamma_G$ is restricted to $\binom{V_1}{k}$. 
\end{theorem}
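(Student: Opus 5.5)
We reduce from \Cref{thm:set-intersection-sketching-lower-bound}, using the uniform-support approximation of \Cref{thm:upper-bound-intersection-profile}. Given a power-set distribution $D$ on $[n]$, we replace it by a distribution $D'$ that is uniform over a multiset $T_1,\dots,T_m$ of subsets of $[n]$. Here $m=\Theta(k\log(\frac{n}{k})/\epsilon^2)$, and we choose the constant so that $d_k(D,D')\le \epsilon/4$. If the support is smaller than $m$, we pad it by repeating each set equally often, or equivalently we fix $m$ large enough and repeat sets. From $D'$ we build the bipartite graph $G_{D'}\in\mathcal{B}_{n,m}$ with $V_1=[n]$ and $V_2=\{1,\dots,m\}$, where item $x\in V_1$ is joined to $r\in V_2$ exactly when $x\in T_r$. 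Multiset repetitions are distinct vertices of $V_2$, so they cause no problem.

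The key identity is that for every query $S\subseteq V_1$, a vertex $r\in V_2$ is a neighbor of $S$ if and only if $T_r\cap S\neq\varnothing$. Since $S\subseteq V_1$ and the graph is bipartite, $\Gamma_G(S)\subseteq V_2$. Hence
\[
|\Gamma_{G_{D'}}(S)| = m\cdot F_{D'}(S), \qquad\text{so}\qquad \bigl|\,|\Gamma_{G_{D'}}(S)|/m - F_D(S)\,\bigr|\le \epsilon/4 \quad\text{for all } S\in\binom{[n]}{k}.
\]

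Next, suppose $(\phi,\psi)$ is an $(\epsilon' m)$-additive sketch for neighborhood sizes restricted to $\binom{V_1}{k}$, with $\epsilon'=\epsilon/4$. We define a sketch for $\mathcal{F}_{n,k}$ by setting $\phi'(F_D)=\phi(G_{D'})$, fixing a canonical choice of $D'$ for each $F_D$, and $\psi'(b,S)=\psi(b,S)/m$. Its error is at most $\epsilon'+\epsilon/4\le\epsilon/2$. This gives an $(\epsilon/2)$-additive sketch for $\mathcal{F}_{n,k}$, so by \Cref{thm:set-intersection-sketching-lower-bound}, whose bound is insensitive to constant rescaling of $\epsilon$ as long as $\epsilon/2$ stays in the allowed range, its bit complexity is $\Omega(nk\log(\frac{n}{k})/\epsilon^2)$. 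The neighborhood sketch has the same bit complexity, so the bound transfers.

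The main obstacle is bookkeeping with constants. First, the statement uses the same $\epsilon$ in both places, so we must absorb the factor-$4$ loss. The cleanest route is to apply \Cref{thm:set-intersection-sketching-lower-bound} with accuracy $c\epsilon$ and rename constants. If $c\epsilon$ drops below $1/\sqrt n$, we handle the boundary range by monotonicity of the lower bound in $\epsilon$, or simply note that the statement is asymptotic. Second, $m$ must be a fixed function of $(n,k,\epsilon)$ rather than of $D$, which is why we pad the support to exactly $m$.
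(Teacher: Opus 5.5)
Your proposal is the paper's proof: approximate $D$ by a uniform $m$-set multiset $D'$ via \Cref{thm:upper-bound-intersection-profile}, use the item--set incidence graph so that $|\Gamma_G(S)| = m\cdot F_{D'}(S)$ for $S\subseteq V_1$, divide the neighborhood sketch by $m$ to get an $O(\epsilon)$-additive sketch for $\mathcal{F}_{n,k}$, and invoke \Cref{thm:set-intersection-sketching-lower-bound}; the paper simply accepts a $2\epsilon$ error instead of tuning constants. Two small corrections to your bookkeeping: repeating each set equally often reaches exactly $m$ sets only when the support size divides $m$, so it is cleaner to draw exactly $m$ i.i.d.\ samples from $D$ and apply Hoeffding plus a union bound over $\binom{[n]}{k}$; and the real boundary problem is at the top of the $\epsilon$-range, not below $1/\sqrt{n}$, since for $\epsilon\ge 1/2$ the constant answer $m/2$ is already an $(\epsilon m)$-additive sketch, so the statement (and both arguments) only hold for $\epsilon$ bounded away from $1$.
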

\begin{proof}
    Let $D$ be any power-set distribution over $[n]$. By \Cref{thm:upper-bound-intersection-profile}, there exists a power-set distribution $D'$ over $[n]$ that is uniform over a support of $m=\Theta(k \log(\frac{n}{k})/\epsilon^2)$ (not necessarily distinct) sets and such that $|F_D(S) - F_{D'}(S)| \leq \epsilon$ for each $S\in \binom{[n]}{k}$.  

    Suppose there exists an $\epsilon m$-additive sketch $(\phi, \psi)$ for $\{|\Gamma_G(\cdot)|\}_{G\in \mathcal{B}_{n,m}}$ with bit complexity $\kappa$. Specifically, for any bipartite graph $G=(V_1 \cup V_2, E) \in \mathcal{B}_{n,m}$, let $\Delta_G(\cdot)=\psi(\phi(G), \cdot)$. Then, $\left||\Gamma_G(S)| - \Delta_G(S)\right| \leq \epsilon \cdot m$ for each $S\in \binom{V_1}{k}$.
    
    Let the support of $D'$ be the multiset $\mathcal{S} = \{S_1, \dots, S_m\}$, where $S_i \subseteq [n]$ for each $i$. We build the following bipartite graph $G = (V_1 \cup V_2, E)$, with $V_1=[n]$ and $V_2 = \mathcal{S}$. For each $x\in V_1$ and $S\in V_2$, we add the edge $\{x, S\}$ to $E$ if and only if $x\in S$. Observe that, for each $S\subseteq[n]$, it holds that: 
    \[
        |\Gamma_G(S)| = \sum_{i\in [m]} \left[S \cap S_i \neq \varnothing \right] = m \cdot F_{D'}(S).
    \]
    Then, for each $S\in \binom{[n]}{k}$, we have:
    \[
        \left| \frac{\Delta_G(S)}{m} - \frac{|\Gamma_G(S)|}{m} \right| \leq \frac{\epsilon \cdot m}{m} \leq \epsilon.
    \]
    Putting it all together, we have that for all $S\in\binom{[n]}{k}$:
    \begin{align*}
        \left| \frac{\Delta_G(S)}{m} - F_D(S) \right| &\leq \left| \frac{\Delta_G(S)}{m} - \frac{|\Gamma_G(S)|}{m} \right| + \left| \frac{|\Gamma_G(S)|}{m} - F_{D'}(S) \right| + \left| F_{D'}(S) -F_{D}(S) \right|  \leq 2\epsilon .
    \end{align*}

    This immediately implies a $2\epsilon$-additive sketch $(\phi, \psi(\cdot, \cdot)/N)$ for $\{F_D\}_{D\in\mathcal{D}_n}$ with bit complexity $\kappa + O(\log m)$. Therefore, by \Cref{thm:set-intersection-sketching-lower-bound}, the bit complexity of the sketch for $\{|\Gamma_G(\cdot)|\}_{G\in \mathcal{B}_{n,m}}$ must be $\kappa=\Omega(n\cdot m)$. 
\end{proof}

This result immediately yields a lower bound on the bit complexity of $\varepsilon$-multiplicative sketches of neighborhood cardinalities for arbitrary undirected graphs. In particular, let $\mathcal{G}_n$ be the set of all undirected graphs on $n$ vertices, we have the following:
\begin{corollary}
    Let $\alpha\in(0,1)$ be a constant. Then, any $\alpha$-multiplicative sketch for $\{|\Gamma_G(\cdot)| \mid G\in \mathcal{G}_n\}$ has bit complexity $\Omega(n^2)$.
\end{corollary}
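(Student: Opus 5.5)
We reduce from the bipartite lower bound of \Cref{thm:vertexlb}, instantiated with constant $\epsilon$ (say $\epsilon = 0.01$) and $k = \lfloor n'/2\rfloor$ for a universe of size $n'$. This gives $m = \Theta(n')$ and a lower bound of $\Omega(n'^2)$ for $(\epsilon m)$-additive sketches of $|\Gamma_G(\cdot)|$ over $\mathcal{B}_{n',m}$, restricted to queries $S \subseteq V_1$. We then show that an $\alpha$-multiplicative sketch for general graphs on $n$ vertices yields such an additive sketch for bipartite graphs with $n' + m \le n$. Since $m = \Theta(n')$, we can choose $n' = \Theta(n)$, and the lower bound becomes $\Omega(n^2)$.

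The embedding is direct. Given $G \in \mathcal{B}_{n',m}$, view it as an undirected graph on $n' + m$ vertices, padding with isolated vertices up to $n$. For $S \subseteq V_1$, the neighborhood $\Gamma_G(S)$ lies entirely in $V_2$, and it is the same set in the bipartite and in the general view. So any sketch of $|\Gamma_G(\cdot)|$ for general graphs answers these bipartite queries.

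The main obstacle is turning multiplicative error into the small additive error $\epsilon m$ needed by \Cref{thm:vertexlb}. An $\alpha$-multiplicative sketch guarantees only additive error $\alpha|\Gamma_G(S)| \le \alpha m$, and $\alpha$ is an arbitrary constant that may exceed $\epsilon$.

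To fix this, I would amplify with a fixed offset that shifts all queries but is known in advance. Add a set $P$ of $p$ padding vertices, each adjacent to a single extra vertex $v_0$, and ask queries of the form $S \cup \{v_0\}$. Then $|\Gamma(S \cup \{v_0\})| = |\Gamma_G(S)| + p$ exactly, assuming $v_0 \notin \Gamma_G(S)$ and $P \cap \Gamma_G(S) = \varnothing$, which holds because $v_0$'s only neighbors are in $P$. But this makes things worse: the error becomes $\alpha(|\Gamma_G(S)| + p)$.

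So offsets do not help. The only real lever is making $\alpha$ small relative to $\epsilon$, which means invoking \Cref{thm:vertexlb} with $\epsilon$ chosen as a function of $\alpha$ rather than as an absolute constant. Concretely, set $\epsilon := \alpha$, which is legitimate because $\alpha \in (0,1)$ is a constant and $\alpha > 1/\sqrt{n'}$ for large $n'$. Take $k = \lfloor n'/2 \rfloor$, so that $m = \Theta(n'/\alpha^2) = \Theta(n')$.

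The $\alpha$-multiplicative sketch then gives additive error $\alpha|\Gamma_G(S)| \le \alpha m = \epsilon m$, exactly as \Cref{thm:vertexlb} requires. Its bit complexity is therefore $\Omega(n' k \log(n'/k)/\alpha^2) = \Omega(n'^2)$. Choosing $n'$ so that $n' + m \le n$, which gives $n' = \Theta(\alpha^2 n) = \Theta(n)$, yields the claimed $\Omega(n^2)$ bound.

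The remaining detail to check is that restricting the query domain to $\binom{V_1}{k}$ is harmless, since a sketch for all queries is in particular a sketch for these.
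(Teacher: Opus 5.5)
Your proposal is correct and is the reduction the paper intends when it says the corollary follows immediately from \Cref{thm:vertexlb}. You embed a bipartite instance with $k=\Theta(n')$ and $\epsilon=\alpha$ (so $m=\Theta(n')$ and $n'=\Theta(n)$) into $\mathcal{G}_n$ by padding with isolated vertices, and observe that for $S\subseteq V_1$ an $\alpha$-multiplicative answer is within $\alpha|\Gamma_G(S)|\le\alpha m$ of the truth.

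One caveat applies to the paper's argument as much as to yours. Since $0\le|\Gamma_G(S)|\le m$, the constant answer $m/2$ is already an $(m/2)$-additive sketch, so \Cref{thm:vertexlb} as stated cannot hold for $\epsilon\ge 1/2$. Hence the step ``set $\epsilon:=\alpha$'' only establishes the corollary for $\alpha$ below some absolute constant, not for every $\alpha\in(0,1)$. Your worry that a large $\alpha$ is an obstacle was well founded; it is resolved only by the overly broad range $\epsilon\in(1/\sqrt{n},1)$ in the cited theorem.
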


Moreover, since our construction applies to sketching the neighborhood cardinalities of sets on one side of bipartite graphs, the lower bound applies to the problem of sketching outer neighborhood cardinalities. That is, given an undirected graph $G$ let:
\[
    \Gamma^{\text{out}}_G(S):= \Gamma_G(S)\setminus S = \{v \in V\setminus S \mid \exists u\in S, \{u,v\} \in E\},
\]
then, we have the following result.

\begin{corollary}
    Let $\alpha \in(0,1)$ be a constant. Then, any $\alpha$-multiplicative sketch for $\{|\Gamma^{\text{out}}_G(\cdot)| \mid G\in \mathcal{G}_n\}$ has bit complexity $\Omega(n^2)$.
\end{corollary}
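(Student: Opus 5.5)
We reduce from the bipartite construction of \Cref{thm:vertexlb}. Fix a constant $\alpha\in(0,1)$, set $k=\lfloor n/4\rfloor$ (or any $k=\Theta(n)$ with $3\le k\le n/2$), and pick a constant $\epsilon$ to be chosen below. \Cref{thm:vertexlb} then gives $m=\Theta(k\log(n/k)/\epsilon^2)=\Theta(n)$ such that any $(\epsilon m)$-additive sketch for $\{|\Gamma_G(\cdot)|\}_{G\in\mathcal{B}_{n,m}}$, queried only on $S\in\binom{V_1}{k}$, needs $\Omega(n^2)$ bits. Each $G\in\mathcal{B}_{n,m}$ is an ordinary undirected graph on $N=n+m=\Theta(n)$ vertices, so it lies in $\mathcal{G}_N$. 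An $\alpha$-multiplicative sketch for $\mathcal{G}_N$ therefore restricts to a sketch for $\mathcal{B}_{n,m}$ of the same bit complexity.

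The key step is converting multiplicative error into additive error. For $S\subseteq V_1$, all neighbors lie in $V_2$, so $|\Gamma_G(S)|\le m$. A guarantee $|\Delta(S)-|\Gamma_G(S)||\le\alpha|\Gamma_G(S)|$ thus gives additive error at most $\alpha m$.

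The obstacle is that $\alpha$ is an arbitrary constant and need not be small, while \Cref{thm:vertexlb} needs additive error $\epsilon m$ with $\epsilon<1$. This is still fine: taking $\epsilon=\min\{\alpha,1/2\}$ if $\alpha<1$ works directly, since $\alpha\in(0,1)$ and the theorem allows any constant $\epsilon\in(1/\sqrt n,1)$. So the additive error $\alpha m=\epsilon m$ falls within the theorem's range. Note that $m$ depends on $\epsilon$, but only by a constant factor, so $m$ and $N$ remain $\Theta(n)$ and the bound is $\Omega(n\cdot k\log(n/k)/\epsilon^2)=\Omega(n^2)=\Omega(N^2)$. One technicality is that \Cref{thm:vertexlb} asserts the existence of some $m$ of that order. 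The sketch for $\mathcal{G}_N$ must be instantiated at that particular $N=n+m$, and since $N=\Theta(n)$ this can be done for infinitely many $N$. For other $N'$, pad with isolated vertices, which changes no neighborhood size.

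For $\Gamma^{\text{out}}_G$, the same argument applies verbatim. When $S\subseteq V_1$ and $G$ is bipartite, $\Gamma_G(S)\subseteq V_2$ is disjoint from $S$, so $\Gamma^{\text{out}}_G(S)=\Gamma_G(S)$. Hence a multiplicative sketch for outer neighborhoods on $\mathcal{G}_N$ is also an $\epsilon m$-additive sketch for $\{|\Gamma_G(\cdot)|\}_{G\in\mathcal{B}_{n,m}}$ on $\binom{V_1}{k}$, and \Cref{thm:vertexlb} yields $\Omega(N^2)$ bits.
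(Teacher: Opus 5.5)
Your reduction is the one the paper has in mind. You restrict to bipartite graphs and queries $S\subseteq V_1$, where $\Gamma^{\text{out}}_G(S)=\Gamma_G(S)$. You convert $\alpha$-multiplicative error into $\alpha m$-additive error via $|\Gamma_G(S)|\le m$, and apply \Cref{thm:vertexlb} with $k=\Theta(n)$. The bookkeeping with $N=n+m=\Theta(n)$ and padding by isolated vertices is fine.

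The gap is in how you handle large $\alpha$. With $\epsilon=\min\{\alpha,1/2\}$ and $\alpha>1/2$, you get $\epsilon m=m/2<\alpha m$. So an $\alpha$-multiplicative sketch is not an $(\epsilon m)$-additive one, and your identity $\alpha m=\epsilon m$ is false. Taking $\epsilon=\alpha$ instead does not help. For every $\epsilon\ge 1/2$, the constant answer $m/2$ is an $O(1)$-bit $(\epsilon m)$-additive sketch for $\{|\Gamma_G(\cdot)|\}_{G\in\mathcal{B}_{n,m}}$ on $V_1$-queries. Hence \Cref{thm:vertexlb} cannot hold in that range, despite the interval $(1/\sqrt{n},1)$ in its statement. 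Its proof passes through a $2\epsilon$-additive sketch for $\mathcal{F}_{n,k}$, so it actually needs $2\epsilon$ below the constant threshold of the underlying itemset lower bound.

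So an argument that goes through additive error can only establish the corollary for $\alpha$ below some absolute constant, certainly below $1/2$. For example, plug the packing of \Cref{thm:set-intersection-ell-infty-packing} directly into the bipartite graphs of $D_A$ and $D_B$. For every pair this gives a query $T\subseteq V_1$ with $|\Gamma(T)|=m$ in one graph and $|\Gamma(T)|<0.9m$ in the other. A shared sketch value $\Delta$ would then need $(1-\alpha)m\le\Delta<0.9(1+\alpha)m$, which is impossible once $\alpha\le 1/19$.

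To cover every constant $\alpha\in(0,1)$, you would need $2^{\Omega(n^2)}$ graphs such that every pair admits a query whose (outer) neighborhood sizes differ by a factor larger than $(1+\alpha)/(1-\alpha)$. Neither your proposal nor the paper's one-line justification, which also just transfers the additive bound, provides this. You should either restrict the statement to a sufficiently small constant $\alpha$ or supply such a genuinely multiplicative construction.
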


\section{Applications: Sketches for Coverage Functions}\label{sec:coverage-func}

We now explore the second application of \Cref{thm:set-intersection-sketching-lower-bound}: sketching coverage functions. We recall the key definitions.
\begin{definition}[Normalized Coverage Function]
Given a universe $U$ of cardinality $m$ and a collection $S_1,\ldots, S_n \subseteq U$ of subsets, the \emph{normalized coverage function} $f: 2^{[n]} \rightarrow [0,1]$ associated with $S_1,\ldots, S_n$ and the universe $U$ is defined as
\[
f(A) = \frac{1}{m} \cdot \left|\bigcup_{i \in A} S_i\right|,
\]
where $A\subseteq [n]$. Given a collection of positive integer weights $W= \{w_{u}\}_{u\in U}$ each associated with an element of the universe $U$, the \emph{weighted coverage function} $f_W$ associated with $S_1, \dots S_n$ and the universe $U$ is defined as:
\[
    f_W(A) = {1\over w(U)} \cdot w\left(\bigcup_{i\in A} S_i\right)
\]
where, for any subset $B\subseteq U$, we define:
\[
    w(B) := \sum_{i\in B} w_i.
\]
\end{definition}

We consider the problem of sketching the family of all normalized coverage functions. Specifically, let $\mathcal{C}_{n,m}$ be the set of all normalized coverage functions with universe size $m$ and a collection of $n$ subsets. We also let $\mathcal{C}_{n,m,k}$ be the class of normalized coverage functions where we restrict the domain of the functions to $\binom{[n]}{k}$ rather than $2^{[n]}$. We show that the bounds obtained thanks to \Cref{thm:set-intersection-sketching-lower-bound} are tight both for additive error sketches and for multiplicative error sketches (in the latter case, up to polylog factors). 

\subsection{Additive-Error Sketch}\label{sec:coverage-func-additive}

\begin{theorem}\label{thm:coveragelb}
    Let $3\leq k\leq n/2$ and $\epsilon\in\left(\frac{1}{\sqrt{n}}, 1\right)$. Then, there exists $m=\Theta\left(\frac{k}{\epsilon^2} \log({n\over k})\right)$ such that any $\epsilon$-additive sketch for the class of normalized coverage functions $\mathcal{C}_{n,m,k}$ requires bit complexity $\Omega(n \cdot k \cdot \log(\frac{n}{k}) / \epsilon^2)$.
\end{theorem}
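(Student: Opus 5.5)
We follow the same reduction as in the proof of \Cref{thm:vertexlb}. The idea is to show that any $\epsilon$-additive sketch for $\mathcal{C}_{n,m,k}$ gives a $2\epsilon$-additive sketch for $\mathcal{F}_{n,k}$ with essentially the same bit complexity, and then to invoke \Cref{thm:set-intersection-sketching-lower-bound}.

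First, fix an arbitrary power-set distribution $D$ on $[n]$. By \Cref{thm:upper-bound-intersection-profile}, there is a distribution $D'$, uniform over a multiset $\{U_1,\dots,U_m\}$ of subsets of $[n]$ with $m=\Theta(\frac{1}{\epsilon^2}\log\binom{n}{k})=\Theta(\frac{k}{\epsilon^2}\log\frac nk)$, such that $d_k(D,D')\le\epsilon$. If the lemma gives a smaller support, we pad it to exactly $m$ sets by duplicating each set the same number of times; when the sizes do not divide evenly, we use the slack in the $\Theta$ to fix one value of $m$ that works for every $D$.

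Next, we encode $D'$ as a coverage function by swapping the roles of items and sets. The universe is $U=[m]$, with one element for each support set $U_\ell$. For each item $x\in[n]$, define $S_x=\{\ell\in[m] : x\in U_\ell\}$. Then for every $A\subseteq[n]$,
\[
f(A)=\frac1m\Bigl|\bigcup_{x\in A}S_x\Bigr|=\frac1m\bigl|\{\ell : U_\ell\cap A\neq\varnothing\}\bigr| = F_{D'}(A),
\]
so $f\in\mathcal{C}_{n,m}$ and its restriction to $\binom{[n]}{k}$ lies in $\mathcal{C}_{n,m,k}$.

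Given an $\epsilon$-additive sketch $(\phi,\psi)$ for $\mathcal{C}_{n,m,k}$, define a sketch for $\mathcal{F}_{n,k}$ by mapping $F_{D,k}$ to $\phi(f)$, where $f$ is built as above. This choice of $f$ is made canonically, for instance by taking the lexicographically first valid $D'$. For each $S\in\binom{[n]}{k}$, the triangle inequality gives
\[
|\psi(\phi(f),S)-F_D(S)|\le |\psi(\phi(f),S)-f(S)|+|F_{D'}(S)-F_D(S)|\le 2\epsilon.
\]
No extra bits are needed, because $m$ is a fixed function of $n,k,\epsilon$, unlike the normalization in the neighborhood case. \Cref{thm:set-intersection-sketching-lower-bound}, applied with accuracy $2\epsilon$, then yields bit complexity $\Omega(nk\log(n/k)/(2\epsilon)^2)=\Omega(nk\log(n/k)/\epsilon^2)$.

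The main obstacle is parameter bookkeeping. \Cref{thm:set-intersection-sketching-lower-bound} needs $2\epsilon\in(1/\sqrt n,1)$, which fails if $\epsilon\ge 1/2$. For such $\epsilon$, the claimed bound is $\Omega(nk\log(n/k))$ up to constants. We would handle this by rescaling the reduction (applying the lower bound at accuracy $\epsilon'=\min(2\epsilon,c)$ for a constant $c<1$), or by noting that constants are absorbed.
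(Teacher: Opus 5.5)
Your reduction is the paper's: approximate $D$ by a uniform $D'$ on $m$ sets via \Cref{thm:upper-bound-intersection-profile}, swap items and support sets to get a coverage function $f=F_{D'}$ on universe $[m]$, and use the triangle inequality to turn an $\epsilon$-additive sketch of $\mathcal{C}_{n,m,k}$ into a $2\epsilon$-additive sketch of $\mathcal{F}_{n,k}$ with no extra bits. You then invoke \Cref{thm:set-intersection-sketching-lower-bound}, and that core argument is correct.

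Two of your bookkeeping remarks need fixing.

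\emph{The range of $\epsilon$.} The patch in your last paragraph does not work. A $2\epsilon$-additive sketch is not a $\min(2\epsilon,c)$-additive sketch when $2\epsilon>c$. The obstruction is the \emph{hypothesis} of the cited theorem, not a constant in its conclusion, so it is not absorbed either. In fact no argument can handle $\epsilon\ge 1/2$. Every function in $\mathcal{C}_{n,m,k}$ is $[0,1]$-valued, so always answering $1/2$ is an $\epsilon$-additive sketch with $O(1)$ bits. The same example shows the paraphrase of \Cref{thm:set-intersection-sketching-lower-bound} can only hold for $\epsilon$ below a constant. The right fix is to restrict $\epsilon$ to at most half of the range where the cited lower bound actually holds, so that $2\epsilon$ stays in range. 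The paper's proof applies that theorem at $2\epsilon$ without comment and has the same imprecision.

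\emph{Getting exactly $m$ sets.} Duplicating sets cannot produce a single $m=\Theta\bigl(\frac{k}{\epsilon^2}\log\frac nk\bigr)$ divisible by every possible support size. Instead, use the sampling argument behind \Cref{thm:upper-bound-intersection-profile}: draw $m$ i.i.d.\ sets from $D$, then apply Hoeffding and a union bound over $\binom{[n]}{k}$. This works for every $m$ above its threshold, so you can take exactly $m$ samples.
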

\begin{proof}
    Let $D\in \mathcal{D}_n$ be any power-set distribution over $[n]$. By \Cref{thm:upper-bound-intersection-profile}, there exists a power-set distribution $D'$ over $[n]$ that is uniform over a support of $m=\Theta(\frac{k}{\epsilon^2} \log(n/k))$ sets and such that $|F_D(S) - F_{D'}(S)| \leq \epsilon$ for each $S\in \binom{[n]}{k}$. Let $\mathcal{T}=\{T_1, \dots, T_m\}$ be the support of $D'$. 

    Consider now the coverage function $f$ having universe $[m]$ and sets $S_1, \dots, S_n$ such that: $S_i = \{j\in [m] \mid i\in T_j\}$ for each $i\in[n]$. Observe that for each $A \subseteq [n]$ we have:
    \[
        f(A) = \frac{1}{m}\cdot \left|\bigcup_{i\in A} S_i\right| = \frac{1}{m} \sum_{j\in[m]} [A \cap T_j \neq \varnothing] = F_{D'}(A).
    \]
    Suppose there exists an $\epsilon$-additive sketch $(\phi, \psi)$ for $\mathcal{C}_{n,m,k}$ with bit complexity $\kappa$. Specifically, let $\Delta_f(\cdot)=\psi(\phi(f), \cdot)$. For each $A\in \binom{[n]}{k}$, we have:
    \begin{align*}
        |\Delta_f(A) - F_D(A)| &\leq 
        |\Delta_f(A) -f(A)| + |f(A) - F_{D'}(A)| + |F_{D'}(A)-F_D(A)| \leq 2\epsilon.
    \end{align*}
    Therefore, $(\psi, \phi)$ is a $2\epsilon$-additive sketch for $\mathcal{F}_{n,k}$ and therefore by \Cref{thm:set-intersection-sketching-lower-bound} its bit complexity is $\kappa=\Omega(n \cdot k \cdot \log(\frac{n}{k})/\epsilon^2)$.
 \end{proof}

 Note that by setting $k=\Theta(n)$, we have that sketching $\mathcal{C}_{n,m}$, for $m=\Theta(n/\epsilon^2)$, requires $\Omega(n^2/\epsilon^2)$ bits. 

Moreover, a simple application of the Hoeffding bound, shows that our lower bound is tight for additive errors, a result which we now show.
\begin{restatable}{theorem}{CoverageFunctionSketchingUBAdditive}\label{thm:coverage-function-sketching-ub-additive}
For every $\epsilon > 0$, $k\geq 2$, and every normalized coverage function $f$, there exists a normalized coverage function $f'$ defined on a universe of cardinality $t = O\left(n/\epsilon^{2}\right)$ (resp., $t=O\left(\log(\binom{n}{k})/\epsilon^2\right)$) such that for each $A\subseteq [n]$ (resp., $A\in\binom{[n]}{k}$): 
\[
    |f(A) - f'(A)| \leq \varepsilon.
\]
Thus, there is an $\epsilon$-additive sketch for $\mathcal{C}_{n,m}$ (resp., $\mathcal{C}_{n,m,k}$) with bit complexity $O(n^2/\epsilon^2)$ (resp., $O\left(n \cdot \log(\binom{n}{k})/\epsilon^2\right)$).
\end{restatable}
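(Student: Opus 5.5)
The plan is to reduce to the sampling argument behind \Cref{thm:upper-bound-intersection-profile}, done directly on the coverage function. Let $f$ have universe $U$, $|U|=m$, and sets $S_1,\dots,S_n$. For each $u\in U$, let $T_u=\{i\in[n] : u\in S_i\}$ be the set of indices whose sets contain $u$. Then $u\in\bigcup_{i\in A}S_i$ if and only if $A\cap T_u\neq\varnothing$. Hence $f(A)=\Pr_{u}[A\cap T_u\neq\varnothing]$ with $u$ uniform on $U$; that is, $f=F_D$, where $D$ is the power-set distribution that outputs $T_u$ for a uniformly random $u$.

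Next, sample $t$ elements $u_1,\dots,u_t$ independently and uniformly from $U$. Define $f'$ on the universe $[t]$ with sets $S'_i=\{\ell\in[t] : u_\ell\in S_i\}$. This is a genuine normalized coverage function, and $f'(A)=\frac1t\sum_{\ell}[A\cap T_{u_\ell}\neq\varnothing]$. Repeated samples are harmless, since they become distinct universe elements of $[t]$, so no weights are needed.

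For a fixed $A$, $f'(A)$ is an average of $t$ i.i.d.\ Bernoulli variables with mean $f(A)$. Hoeffding gives $\Pr[|f'(A)-f(A)|>\epsilon]\le 2e^{-2t\epsilon^2}$. A union bound over all $2^n$ sets $A$ (resp.\ the $\binom nk$ sets of size $k$) fails with probability less than $1$ once $t=\lceil (n\ln 2+\ln 2+1)/(2\epsilon^2)\rceil=O(n/\epsilon^2)$ (resp.\ $t=O(\log\binom nk/\epsilon^2)$). So some choice of samples works. Alternatively, one could simply invoke \Cref{thm:upper-bound-intersection-profile} on $D$ and translate its uniform multiset support back to a coverage function by the same correspondence.

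For the sketch, store $f'$ by writing down the $n$ sets $S'_i\subseteq[t]$ as $t$-bit indicator vectors: $nt$ bits, i.e.\ $O(n^2/\epsilon^2)$ (resp.\ $O(n\log\binom nk/\epsilon^2)$). We also store $t$ itself in $O(\log t)$ bits, which is dominated. The decoder $\psi$ computes $f'(A)$ exactly from these vectors, so the error is at most $\epsilon$. The only mildly delicate step is the observation that $f$ is itself an intersection profile of a uniform distribution and that subsampling with repetition preserves the coverage-function form; the concentration step is routine.
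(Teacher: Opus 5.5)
Your proposal is correct and follows essentially the paper's argument: subsample $t$ universe elements, apply Hoeffding to each fixed $A$, take a union bound over the $2^n$ (resp.\ $\binom{n}{k}$) queries, and store the $n$ restricted sets in $nt$ bits. The only difference is that you sample with replacement and treat repeats as distinct elements of $[t]$, which lets you use the standard i.i.d.\ Hoeffding bound and avoid a case split. The paper instead samples without replacement, invokes Hoeffding for sampling without replacement, and handles $m \le t$ separately.
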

\begin{proof}

If we are finding a sketch for $\mathcal{C}_{n,m}$, fix $t = \left\lceil \frac{(n+2) \ln 2}{2\epsilon^2} \right\rceil$, otherwise fix $t=\left\lceil \frac{\ln (4\binom{n}{k})}{2\epsilon^2} \right\rceil$.  If $m \le t$, the statement trivially follows since the total space to represent the original instance is $O(mn)=O(n\cdot t)$.  Therefore, we assume $m > t$.

We construct a sub-universe $T \subseteq U$ by selecting $t$ elements from $U$ uniformly at random, without replacement. 
Now, consider the (random) normalized coverage function $f'$ associated with the collection $S_1 \cap T, \ldots, S_n \cap T$ and the universe $T$, defined as
\[
f'(A) = \frac{1}{t} \cdot \left|\bigcup_{i \in A} S_{i} \cap T\right|.
\]
For each $A \subseteq [n]$, we have
$\E\left[f'(A)\right] = f(A)$, where the expectation is over the choice of $T$.  Applying a Hoeffding bound for sampling without replacement (see, e.g., \citep{s74}), we can bound the deviation from the mean as:
\[
\Pr[|f'(A) - f(A)| \ge \epsilon] 
\leq 2 \exp\left(-2 \epsilon^2 t\right).
\]
Now, if we are finding a sketch for $\mathcal{C}_{n,m}$, we apply a union bound over all $A \subseteq [n]$, and obtain:
\[
\Pr\left[\exists A \subseteq [n]: \left| f'(A) - f(A)\right| \ge \epsilon \right] \leq 2^n \cdot 2\exp(-2\epsilon^2t) \le2^n \cdot 2^{-(n+1)} = \frac12,
\]
by our choice of $t$ and since $n \ge 1$. Thus, $\left|f'(A) - f(A)\right|< \epsilon$ for each $A\subseteq[n]$, with probability at least $1/2$. Similarly, to obtain a sketch for $\mathcal{C}_{n,m,k}$, we apply a union bound over all $A\in \binom{[n]}{k}$ and obtain:
\[
\Pr\left[\exists A \in \binom{[n]}{k}: \left| f'(A) - f(A)\right| \ge \epsilon \right] \leq \binom{n}{k} \cdot 2\exp(-2\epsilon^2t) \le \binom{n}{k} \cdot \frac{1}{2}\cdot \binom{n}{k}^{-1} = \frac12,
\]
by our choice of $t$.

To store $f'$, notice that we only need to store $S_1 \cap T, \ldots, S_n \cap T$.  Since $|T| = t$, this uses space $t \cdot n$ bits, concluding the proof.  
\end{proof}

\subsection{Multiplicative-Error Sketch}\label{sec:coverage-func-multiplicative}

It is easy to see that Theorem~\ref{thm:coveragelb} also applies to 
$\epsilon$-multiplicative sketches of $\mathcal{C}_{n,m}$ or $\mathcal{C}_{n,m,k}$. Indeed, an $\epsilon$-multiplicative sketch also implies an $\epsilon$-additive sketch, since $f(A)\leq 1$ for each $A\subseteq [n]$. 

In terms of upper bounds, \cite{bdfknr12} provide an  $\epsilon$-multiplicative sketch that uses $\tilde{O}(n^3/\epsilon^2)$ bits.\footnote{\cite{bdfknr12} do not study the bit complexity explicitly, but provide a weighted coverage function with universe size $O(n^2/\epsilon^2)$ and weights of polynomial size in $n$ that approximates the original coverage function; therefore, a simple upper bound on their bit complexity is $\tilde{O}(n^3/\epsilon^2)$. 
} \cite{bem17} provide a multiplicative sketch for a coverage function to approximate the $k$-cover problem with bit complexity $\tilde{O}(nk)$. Their sketch is based on classical results to approximate the number of distinct elements in a data stream \citep{bjkst02,cdim02}, and can be generalized to apply to general coverage functions with bit complexity $\tilde{O}(n^2)$ when $m=n^{O(1)}$. For completeness, in Appendix \ref{app:missing-cover-func-mult-ub}, we provide a proof of the following result:

\begin{restatable}{theorem}{UBCoverageUnweightedMultiplicative}\label{thm:ub-coverage-function-unweighted-multiplicative}
Fix any $\epsilon\in(0,1)$ and $k\geq 2$. There exists an $\epsilon$-multiplicative sketch for the set of all normalized covering functions $\mathcal{C}_{n,m}$ (resp., $\mathcal{C}_{n,m,k}$) with bit complexity $O({n^2\over \varepsilon^2}\log m)$ (resp., $O(\frac{n}{\epsilon^2} \cdot \log(\binom{n}{k}) \cdot \log m)$).
\end{restatable}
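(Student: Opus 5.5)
The plan is to follow the distinct-elements (KMV / bottom-$s$) approach of \cite{bem17}. We use one shared random hash function on the universe $U=[m]$ and store, for each set $S_i$, a short signature from which the size of any union can be estimated multiplicatively.

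Concretely, fix $s=\Theta(\log(\delta^{-1})/\epsilon^2)$, where $\delta$ is the target failure probability per query. Pick a hash $h:U\to[0,1]$; the sketch will derandomize it by fixing a good seed, stored in $O(\log m)$ bits. For each $i\in[n]$, store the $s$ smallest hash values of elements of $S_i$, or all of them if $|S_i|<s$. Since $h$ may be taken with range $[m^3]$ and be injective with good probability, each value costs $O(\log m)$ bits.

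Mergeability of bottom-$s$ sketches is the key structural fact. For any $A\subseteq[n]$, the $s$ smallest hash values of $\bigcup_{i\in A}S_i$ are exactly the $s$ smallest values in the union of the stored lists for $i\in A$. So the sketch for each union is recoverable exactly. The standard estimator is $\widehat{u}(A)=(s-1)/v_s(A)$, where $v_s(A)$ is the $s$-th smallest hash value; if fewer than $s$ values exist, the count is exact. It satisfies $\Pr[|\widehat{u}(A)-|\bigcup_{i\in A}S_i||>\epsilon|\bigcup_{i\in A}S_i|]\le\delta$ when $h$ is fully random, or $O(1/\epsilon^2)$-wise independent in the BJKST style with median boosting. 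Dividing by $m$, which costs $O(\log m)$ bits to store, gives the estimate of $f(A)$.

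For the full class $\mathcal{C}_{n,m}$, choose $\delta<2^{-n-1}$, so $s=\Theta(n/\epsilon^2)$. A union bound over all $2^n$ queries shows that some choice of hash makes every query $\epsilon$-accurate simultaneously, and we fix that choice. The total size is $n\cdot s\cdot O(\log m)=O(\frac{n^2}{\epsilon^2}\log m)$. For $\mathcal{C}_{n,m,k}$, take $\delta<\binom{n}{k}^{-1}/2$, giving $s=\Theta(\log\binom{n}{k}/\epsilon^2)$ and size $O(\frac{n}{\epsilon^2}\log\binom nk\log m)$.

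The main obstacle is the tail bound for the KMV estimator at failure probability $\delta$ with the stated $s$, while keeping the hash description to $O(\log m)$ bits or less. I would first prove it with a truly random hash, using the idea that a fully random $h$ exists in the probabilistic-method sense. For a union of size $u$, the event $\widehat u>(1+\epsilon)u$ means that at least $s$ of the $u$ hash values fall below $(s-1)/((1+\epsilon)u)$. This is a binomial upper tail with mean about $s/(1+\epsilon)$, so Chernoff gives $\exp(-\Omega(\epsilon^2 s))$, and the lower tail is symmetric. The existence argument only needs \emph{some} fixed $h$, whose values we store explicitly in the sketch. So no seed is needed at all: we store only the retained hash values, or better, their ranks under a random permutation $\pi$ of $[m]$, each costing $O(\log m)$ bits. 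With $h$ uniform on a continuum, or $\pi$ a random permutation (where ranks/$m$ play the role of hash values, adjusting the Chernoff bound to sampling without replacement), this avoids limited-independence issues entirely.
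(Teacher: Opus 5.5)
Your proposal is correct, but it reaches the per-query failure probability $\delta = 2^{-(n+1)}$ (resp.\ $\tfrac12\binom{n}{k}^{-1}$) by a different mechanism than the paper.

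\textbf{The paper's route.} It uses $t=\Theta(n)$ independent uniformly random hashes $h_1,\dots,h_t:[m]\to[m^3]$. For each set it stores the $\Theta(1/\epsilon^2)$ smallest values under each $h_j$, and it answers a query with the median of the $t$ merged KMV estimates. The exponentially small failure probability comes as a black box from the constant-confidence distinct-elements guarantee plus median amplification (\Cref{thm:f0-sketch}).

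\textbf{Your route.} You use a single fully random hash (or random permutation) and one bottom-$s$ list with $s=\Theta(n/\epsilon^2)$. You then prove directly, via a binomial (resp.\ hypergeometric) Chernoff bound, that the estimator $(s-1)/v_s$ deviates by more than a $(1\pm\epsilon)$ factor with probability $\exp(-\Omega(\epsilon^2 s))$.

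\textbf{What is shared.} Both schemes store $\Theta(\log(1/\delta)/\epsilon^2)$ values of $O(\log m)$ bits per set, so the bit counts coincide. The remaining steps are also the same:
\begin{enumerate}
\item mergeability of bottom-$k$ lists;
\item the exact-count case when fewer than $s$ values survive, which also handles empty unions;
\item the union bound over $2^n$ or $\binom{n}{k}$ queries;
\item fixing a good hash by the probabilistic method without storing it.
\end{enumerate}

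\textbf{Trade-off.} Your version is more self-contained and avoids medians and multiple hashes. The permutation variant also sidesteps hash collisions entirely. The paper's version needs no new concentration argument, at the price of relying on the cited theorem.

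\textbf{Cleanup.} Drop your opening claim that the hash is fixed by a seed stored in $O(\log m)$ bits, and the aside about $O(1/\epsilon^2)$-wise independence. A family with $O(\log m)$-bit seeds (pairwise independence) yields only a Chebyshev-type failure probability $O(1/(\epsilon^2 s))$, far from $2^{-n}$. Limited independence combined with median boosting is simply the paper's construction. What actually makes your argument work is the observation in your last paragraph: the decoder never evaluates $h$, so a fully random $h$ or permutation can be fixed at zero storage cost.

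If you use the range $[m^3]$ rather than a permutation, intersect with the injectivity event (which fails with probability $<1/(2m)$) before applying the lower-tail bound. Otherwise collisions can make the number of distinct values below the threshold smaller than the binomial count.
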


When the weights are positive integers, it is possible to extend this upper bound to weighted coverage functions. Specifically, let $\mathcal{C}^W_{n,m}$ be the set of all weighted coverage functions with positive integer weights in $W=\{w_u\}_{u\in U}$, and let $\mathcal{C}^W_{n,m,k}$ be the set where we restrict the domain of the functions to $\binom{[n]}{k}$. 

\begin{restatable}{corollary}{WeightedCoverageSketch}\label{cor:weighted-coverage-function-mult}
    For any $\epsilon\in(0,1)$, $k\geq 2$, and positive integer weights $W=\{w_u\}_{u\in U}$, the set of weighted coverage functions $\mathcal{C}^W_{n,m}$ (resp., $\mathcal{C}^W_{n,m,k}$) can be $\epsilon$-multiplicatively sketched with bit complexity $O(\frac{n^2}{\epsilon^2} \cdot \log(w(U)))$ (resp., $O(\frac{n}{\epsilon^2} \cdot \log(\binom{n}{k}) \cdot \log(w(U)))$).
\end{restatable}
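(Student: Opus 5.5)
The plan is to reduce the weighted case to the unweighted \Cref{thm:ub-coverage-function-unweighted-multiplicative} by replicating elements. Given a weighted coverage function $f_W$ on universe $U$ with sets $S_1,\dots,S_n$ and positive integer weights $\{w_u\}$, I would build an unweighted instance on a new universe $U'$ in which each $u \in U$ is replaced by $w_u$ distinct copies $(u,1),\dots,(u,w_u)$. Each set $S_i$ is replaced by $S_i' = \{(u,r) \mid u\in S_i,\ r\in[w_u]\}$. Then $|U'| = w(U)$, and for every $A\subseteq[n]$,
\[
\Bigl|\bigcup_{i\in A} S_i'\Bigr| = \sum_{u\in \bigcup_{i\in A}S_i} w_u = w\Bigl(\bigcup_{i\in A}S_i\Bigr).
\]
So the normalized unweighted coverage function $f'$ on $U'$ coincides exactly with $f_W$ on all of $2^{[n]}$, and in particular on $\binom{[n]}{k}$.

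Next I would apply \Cref{thm:ub-coverage-function-unweighted-multiplicative} with universe size $m' = w(U)$. This gives an $\epsilon$-multiplicative sketch for $\mathcal{C}_{n,w(U)}$ (resp.\ $\mathcal{C}_{n,w(U),k}$) of bit complexity $O(\frac{n^2}{\epsilon^2}\log w(U))$ (resp.\ $O(\frac{n}{\epsilon^2}\log\binom{n}{k}\log w(U))$). Composing it with the map $f_W\mapsto f'$ yields $\phi$, and keeping the same decoder $\psi$ gives the required sketch for $\mathcal{C}^W_{n,m}$ (resp.\ $\mathcal{C}^W_{n,m,k}$), since $f'(A)=f_W(A)$ pointwise.

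The only point needing care is that the sketch of the theorem must depend on the universe size only through $\log m'$. The normalization by $1/m'$ then has to be known to, or stored in, the sketch; storing $w(U)$ costs only $O(\log w(U))$ additional bits, which is absorbed into the bound. There is no real obstacle beyond checking this.
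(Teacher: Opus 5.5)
Your proposal is correct and follows the paper's proof essentially verbatim: replicate each $u$ into $w_u$ copies so that $|U'| = w(U)$ and the resulting unweighted normalized coverage function coincides with $f_W$ on every $A$, then invoke \Cref{thm:ub-coverage-function-unweighted-multiplicative} with universe size $w(U)$. Your remark about the normalization is harmless: since $W$, and hence $w(U)$, is fixed for the class $\mathcal{C}^W_{n,m}$, the decoder already knows it, and storing it would cost only $O(\log w(U))$ extra bits anyway.
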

\begin{proof}
    Consider any $f_W\in \mathcal{C}^W_{n,m}$. We show this result by reducing $f_W$ to an unweighted coverage function $f$. 

    We build a new universe $U'$ as follows: for each $u\in U$ with positive integer weight $w(u)$, we add elements $u_1, \dots, u_{w(u)}$ to $U'$. Similarly, for each $i\in [n]$, we build $S'_i$ from $S_i$ by replacing each $u\in S_i$ with $u_1, \dots, u_{w(u)}$. Let $f$ be the normalized coverage function induced by universe $U'$ and the collection $S_1', \dots, S_n'$. Note that, for each $A\subseteq [n]$, we have:
    \[
        f_W(A) = \frac{1}{w(U)} \cdot w\left(\bigcup_{i\in A} S_i\right) = \frac{1}{|U'|} \cdot \sum_{u\in \cup_{i\in A} S_i} w(u) = \frac{1}{|U'|} \cdot \left| \bigcup_{i\in A} S'_i \right| = f(A).
    \]
    By \Cref{thm:ub-coverage-function-unweighted-multiplicative}, $\mathcal{C}_{n, |U'|}$ can be sketched with $O(\frac{n^2}{\epsilon^2} \log |U'|) = O(\frac{n^2}{\epsilon^2} \log (w(U)))$ bits which immediately implies a sketch for $\mathcal{C}^W_{n,m}$. The result for $\mathcal{C}^W_{n,m,k}$ follows similarly.
\end{proof}

Note that it is common in the literature to assume that $m=n^{O(1)}$ and that $\max_{u\in U} w(u) = n^{O(1)}$ (e.g., \citep{bdfknr12}): in this setting, the previous algorithms provide a sketch with bit complexity $O((n/\epsilon)^2 \log n)$ for $\mathcal{C}_{n,m}^W$.

\section{Application: $\ell_\infty$-Sketching of RUMs}

In this section, we describe the implication of \Cref{thm:set-intersection-sketching-lower-bound} to the problem of sketching the winning distributions of Random Utility Models (RUMs). A \emph{RUM} on $[n]$ is a probability distribution $R$ over the set of permutations on $[n]$.  
Given a non-empty $T \subseteq [n]$, we use $R_T$ to denote the
distribution of the random variable $\pi(T)$ for $\pi \sim R$, where
$\pi(T) = \argmax_{i \in T} \pi(i)$.
We use the terms \emph{slate} to denote $T$, \emph{winner} to denote $\pi(T)$, and \emph{winning distribution} to denote $R_T$. For $i\in T$, $R_T(i)$ is the probability that item $i$ is the winner among the items in $T$, according to a random permutation sampled from $R$.

Let $\mathcal{R}_n$ be the set of all RUMs over $[n]$. Here, we consider the problem of sketching the winning distributions of an arbitrary RUM in $\mathcal{R}_n$ in norm $\ell_p$, for $p \in [1, \infty)$. Specifically, for each RUM $R$ over $[n]$, we wish to construct a data structure $\hat{R}$ that, on input an arbitrary slate $T \subseteq[n]$, can provide an estimate $\hat{R}_T$ of $R_T$ satisfying: $\|{R_T - \hat{R}_T}\|_p \le \varepsilon$. \cite{chierichetti2021light} showed that $\mathcal{R}_n$ can be sketched in $\ell_1$-norm with bit complexity $O((n/\epsilon)^2 \log n)$. They also provided a lower bound of $\Omega(n^2)$ for $\ell_1$, but left open the problem for $\ell_\infty$, for which the best previously known lower bound was $\Omega(n \log n)$. Below, we use Theorem~\ref{thm:set-intersection-sketching-lower-bound} to improve this lower bound to $\Omega(n^2)$. Together with the upper bound on $\ell_1$, this shows that sketching $\mathcal{R}_n$ under any $\ell_p$-norm, $p\in [1,\infty)\cup\{\infty\}$, requires $\tilde{\Theta}(n^2)$ bits.

In fact, we prove a stronger result: we show that even if we fix a special item $i^* \in [n]$ and we are only interested in sketching $f^{i^*}_R(T):=R_{T\cup \{i^*\}}(i^*)$ for each slate $T\in \binom{[n]\setminus\{i^*\}}{k-1}$ with $k\geq 3$, then $\Omega(\frac{n \cdot k}{\epsilon^2} \log(n/k))$ bits are necessary.

\begin{theorem}
Let $3\leq k\leq n/2$, $\epsilon \in \left(\frac{1}{\sqrt{n}}, 1\right)$, and $i^*\in [n]$. Any $\epsilon$-additive sketch for the class $\{f^{i^*}_R\}_{R\in \mathcal{R}_n}$ has bit complexity $\Omega\left(\frac{n \cdot k }{\epsilon^2} \log(n/k)\right)$.
\end{theorem}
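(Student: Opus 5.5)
We reduce from \Cref{thm:set-intersection-sketching-lower-bound}: from a power-set distribution $D$ we build a RUM $R$ such that $f^{i^*}_R$ on $(k-1)$-slates encodes $F_{D}$ on $(k-1)$-subsets of a ground set of size $n-1$. Without loss of generality let $i^*=n$ and $U=[n]\setminus\{i^*\}$, so $|U|=n-1$.

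\textbf{Construction.} Let $D$ be any power-set distribution on $U$. Define $R$ by first sampling $T\sim D$ and then outputting a permutation in which the items of $T$ are ranked above $i^*$ and the items of $U\setminus T$ are ranked below $i^*$. Within each group we use any fixed order, say increasing index, so that $\pi$ is a deterministic function of $T$. For a slate $S\subseteq U$, item $i^*$ wins in $S\cup\{i^*\}$ exactly when no element of $S$ is ranked above it, that is, when $S\cap T=\varnothing$. Therefore
\[
f^{i^*}_R(S)=R_{S\cup\{i^*\}}(i^*)=1-F_D(S)
\]
for every $S\subseteq U$, and in particular for all $S\in\binom{U}{k-1}$.

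\textbf{Transfer of the sketch.} Given an $\epsilon$-additive sketch $(\phi,\psi)$ for $\{f^{i^*}_R\}$ on $(k-1)$-slates, define $\phi'(F_{D,k-1})=\phi(R_D)$ and $\psi'(s,S)=1-\psi(s,S)$. This is an $\epsilon$-additive sketch for $\mathcal{F}_{n-1,k-1}$ with the same bit complexity. The map $\phi'$ is well defined as a function of $F_{D,k-1}$: if two distributions $D,D'$ share the same restricted profile, we fix a canonical representative $D$ for each profile in $\mathcal{F}_{n-1,k-1}$.

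\textbf{Applying the lower bound.} We invoke \Cref{thm:set-intersection-sketching-lower-bound} with universe size $n'=n-1$ and parameter $k'=k-1$. This yields a lower bound of $\Omega(n' k'\log(n'/k')/\epsilon^2)=\Omega(nk\log(n/k)/\epsilon^2)$, since $n'=\Theta(n)$, $k'=\Theta(k)$, and $\log(n'/k')=\Theta(\log(n/k))$ for $k\le n/2$.

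The main obstacle is verifying the parameter ranges. The cited theorem requires $3\le k'\le n'/2$ and $\epsilon>1/\sqrt{n'}$, but $k=3$ gives $k'=2$, and $\epsilon\in(1/\sqrt n,1/\sqrt{n-1}]$ is a tiny gap. I would handle these boundary cases as follows.
\begin{itemize}
\item The range $\epsilon\in(1/\sqrt n,1/\sqrt{n-1}]$ is absorbed by applying the theorem with $\epsilon'=2\epsilon$: an $\epsilon$-additive sketch is also $2\epsilon$-additive, and the bound changes only by constants.
\item For $k=3$, the bound for $k'=3$, $k=4$ already gives the same asymptotics. Alternatively, pad by reducing from $\mathcal{F}_{n',k'}$ with $k'=\max(3,k-1)$.
\item If $k-1>(n-1)/2$, use monotonicity by choosing $k'$ slightly smaller, at a constant-factor cost.
\end{itemize}
If needed, one can also embed $\mathcal{F}_{n/2,k/2}$ into the problem on $n$ items. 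To do so, pad the slates with dummy items that are always ranked below $i^*$, which preserves the identity $f^{i^*}_R(S\cup P)=1-F_D(S)$ for a fixed padding set $P$.
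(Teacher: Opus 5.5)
Your reduction is the paper's. It uses the same RUM (items of $T\sim D$ ranked above $i^*$, the rest below, each block in a fixed order), the same identity $f^{i^*}_R(S)=1-F_D(S)$, and the same transferred sketch $(\phi,1-\psi)$ for $\mathcal{F}_{n-1,k-1}$. It also makes the same appeal to \Cref{thm:set-intersection-sketching-lower-bound} at $(n-1,k-1)$. Your parameter check goes further than the paper, which applies that theorem without comment. The $\epsilon$ adjustment is sound: $3\le k\le n/2$ forces $n\ge 6$, so $2\epsilon$ lands in $(1/\sqrt{n-1},1)$. Your third bullet is vacuous, because $k\le n/2$ already gives $k-1\le (n-1)/2$.

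Your two fixes for $k=3$, however, do not work, because lower bounds only transfer upward in query size. Padding with an item that is never drawn (in RUM terms, always ranked below $i^*$) turns a sketch on $(k-1)$-slates into one for $\mathcal{F}_{\cdot,k'}$ with $k'\le k-1$, never $k'>k-1$. So neither the $k=4$ bound nor a reduction from $\mathcal{F}_{n',3}$ says anything about a sketch that only answers $2$-slates.

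In fact, let $D$ be the law of $\{x\neq i^*: x \text{ ranked above } i^*\}$. Then $f^{i^*}_R=1-F_D$ for \emph{every} RUM. So at $k=3$ the statement is literally equivalent to an $\Omega(n\log n/\epsilon^2)$ bound for $\mathcal{F}_{n-1,2}$, the pair case of the itemset bound. That case cannot be extracted from \Cref{thm:set-intersection-sketching-lower-bound} as stated, since it assumes $k\ge 3$.

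The paper's proof silently relies on the same pair case at $k=3$. So this is a boundary issue rather than a flaw in your main argument. Still, you should either cite the $k'=2$ case from the original source or restrict the statement to $k\ge 4$, rather than invoke the monotonicity you claimed.
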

\begin{proof}
We prove this result by reducing the problem of sketching the intersection profile of a power-set distribution to that of sketching the winning distribution of a RUM. Without loss of generality we will assume that $i^* =n$ and for ease of notation we will write $f_R$ for $f_R^{i^*}$.

Suppose that there is an $\epsilon$-additive sketch $(\phi, \psi)$ for $\{f_R\}_{R\in \mathcal{R}_n}$ with bit complexity $\kappa$. Specifically, for each RUM $R\in \mathcal{R}_n$, let $\Delta_R(\cdot) := \psi(\phi(R), \cdot)$, so that $|\Delta_R(T)-f_R(T)| \leq \epsilon$ for each $T\in\binom{[n-1]}{k-1}$. Let $D$ be any power-set distribution over $[n-1]$. Consider the RUM $R$ defined as follows: sample $S\sim D$, then return a permutation $\pi$, where the top-most $|S|$ items of $\pi$ are the items of $S$ sorted in increasing order; the item in position $|S|+1$ of $\pi$ is $n$; the last $n -1 - |S|$ items of $\pi$ are the items of $[n-1] \setminus S$, also sorted in increasing order. Note that $n$ loses against all items in $S$ and wins against all items in $[n-1]\setminus S$. For any $T\subseteq [n-1]$, we have that:
\[
    f_R(T) = R_{T\cup \{n\}}(n) = \Pr_{S\sim D}[S\cap T = \varnothing] = 1 - F_D(T).
\]
Therefore, for each $T\in\binom{[n-1]}{k-1}$:
\begin{align*}
    |(1- \Delta_R(T)) - F_D(T)| &\leq |(1- f_R(T)) - F_D(T)| + |(1- \Delta_R(T)) - (1- f_R(T))|\leq \epsilon.
\end{align*}
Therefore, $(\phi, 1-\psi(\cdot, \cdot))$ is an $\epsilon$-additive sketch with bit complexity $\kappa$ for $\mathcal{F}_{n-1,k-1}$. Therefore, by \Cref{thm:set-intersection-sketching-lower-bound}, the sketch for $\{f_R\}_{R\in \mathcal{R}_n}$ must have bit complexity $\kappa=\Omega\left(\frac{(n-1)\cdot (k-1)}{\epsilon^2} \log({n-1\over k-1})\right)=\Omega\left(\frac{n\cdot k}{\epsilon^2} \log({n\over k})\right)$. 
\end{proof}

Observe that with a reduction analogous to the one used in the above proof, it is possible to show that any $\epsilon$-additive sketch for $\mathcal{F}_{n-1,k-1}$ with bit complexity $\kappa$ can be transformed into an $\epsilon$-additive sketch for $\{f_R^{i^*}\}_{R\in \mathcal{R}_n}$ for $i^*\in[n]$ with the same bit complexity $\kappa$. Hence, by \Cref{thm:upper-bound-intersection-profile}, $\{f_R^{i^*}\}_{R\in \mathcal{R}_n}$ can be $\epsilon$-additively sketched with $O(n\cdot k \cdot \log(\frac{n}{k}) / \epsilon^2)$ bits. 

Note that when we want to approximate the RUM on slates of all cardinality simultaneously, the previous result shows that $\Omega(n^2/\epsilon^2)$ bits are necessary. 

\section{Conclusions and Future Work}
In this paper we studied the bit complexity of sketching (i) vertex boundaries in graphs, (ii) coverage functions, and (iii) Random Utility Models. Our results are applications of sketching results for intersection profiles of power-set distributions, which can equivalently be regarded as results on sketching itemset frequencies in a database.  %

For the problem of sketching all the winning distribution of a RUM in $\ell_\infty$-distance, our results leave a gap of $\log n$. Is it possible to improve the lower bound or does there exist a more efficient sketch? %

The work of \cite{bdfknr12} provides a more expensive sketch for coverage functions, but their  sketch is \emph{proper}, in that it is itself  a coverage function. Do there exist proper sketches of size $\tilde{O}(n^2)$ bits when approximating all the query sets?

\section*{Acknowledgments}

We thank Justin Thaler for bringing to our attention the relationship between sketching intersection profiles and sketching itemset frequencies, which subsumed our lower bound.  

We also thank Bobby Kleinberg and Robert Krauthgamer for useful discussions on coverage functions and vertex neighborhoods. 

\bibliographystyle{plainnat}
\bibliography{main}

\newpage
\appendix

\section{Missing Proofs from \Cref{sec:coverage-func-multiplicative}}\label{app:missing-cover-func-mult-ub}
In this section, we show an algorithm that multiplicatively sketches $\mathcal{C}_{n,m}$ with $O((n/\epsilon)^2 \log m)$ bits. 
We start by introducing some notation. For a set $S \subseteq \mathbb{N}$ (i.e., without duplicates) and integers $k\geq 1$, and $M\geq 1$, let:
\begin{align*}
    \min{}^{(k)}(S) &= \begin{cases}
        S & \text{if}\quad |S|< k,\\
        \text{$k$ minimum distinct values of $S$}&\text{if}\quad |S|\geq k.
    \end{cases}\\
    \hat{F}_0^{(k,M)}(S) &= \begin{cases}
        |S| &\text{if}\quad |S|< k\\
        \frac{k\cdot M}{\max(S)} &\text{if}\quad |S|\geq k
    \end{cases}
\end{align*}
We will leverage on the well-known $k$-minimum value (KMV) sketch of \cite{bjkst02}, which provides the following guarantees:
\begin{theorem}[Theorem 1 of \citet{bjkst02}, paraphrased]\label{thm:f0-sketch}
    Let $\epsilon, \delta\in (0,1)$, and let $S$ be a multiset over the universe $[m]$, with $|S|=n$. Let $h:[m] \rightarrow [M]$, where $M=m^3$, be a uniformly random function. For $k=O(1/\epsilon^2)$, let 
    \[
        \hat{S}= \min{}^{(k)}\left(\{h(s) \mid s \in S\}\right).
    \]
    Then, with probability $>1/2$, it holds that:
    \[
        (1-\epsilon) \cdot F_0(S) \leq \hat{F}_0^{(k,M)}(\hat{S}) \leq (1+\epsilon) \cdot F_0(S),
    \]
    where $F_0(S)$ is the number of distinct elements of $S$. Storing $\hat{S}$ requires $O(\frac{\log m}{\epsilon^2})$ bits. 

    Moreover, suppose to run the algorithm $t=O(\log (1/\delta))$ times independently, and let $\hat{S}_1, \dots, \hat{S}_t$ be the corresponding sets produced. Let $v$ be the median value of $\hat{F}_0^{(k,M)}(\hat{S}_1), \dots, \hat{F}_0^{(k,M)}(\hat{S}_t)$. Then, with probability $\geq 1-\delta$, it holds that:
    \[
        (1-\epsilon) \cdot F_0(S)\leq v \leq (1+\epsilon) \cdot F_0(S).
    \]
\end{theorem}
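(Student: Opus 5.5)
The statement is the classical KMV guarantee of \citet{bjkst02}. My plan is to reprove it directly from the order statistics of the hashed values.

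\textbf{Setup.} Let $d=F_0(S)$ be the number of distinct elements. The hashed set $H=\{h(s)\mid s\in S\}$ depends only on the $d$ distinct elements, so duplicates in $S$ play no role.

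First, collisions. Since $M=m^3$, a union bound over the at most $\binom{m}{2}$ pairs shows that $h$ is injective on the distinct elements with probability at least $1-1/m$. On that event $|H|=d$.

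\textbf{Case $d<k$.} On the injectivity event, $\hat S=H$ and the estimator returns $|H|=d$ exactly.

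\textbf{Case $d\ge k$.} Now $\max(\hat S)$ is the $k$-th smallest of $d$ i.i.d.\ uniform values in $[M]$. I bound the two failure events separately.

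\emph{Overestimate.} The estimate exceeds $(1+\epsilon)d$ iff $\max(\hat S)<kM/((1+\epsilon)d)$. That means at least $k$ hashed values fall below $\tau=kM/((1+\epsilon)d)$. The count of such values has mean about $k/(1+\epsilon)\le k(1-\epsilon/2)$. Since the hash values of distinct elements are independent, the variance is at most the mean. Chebyshev then gives failure probability $O(1/(\epsilon^2 k))$.

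\emph{Underestimate.} Symmetrically, the estimate falls below $(1-\epsilon)d$ iff fewer than $k$ values lie below $kM/((1-\epsilon)d)$. The expected count there is about $k/(1-\epsilon)\ge k(1+\epsilon)$, and Chebyshev again gives $O(1/(\epsilon^2k))$.

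\textbf{Choosing $k$.} With $k=C/\epsilon^2$ for a large constant $C$, the total failure probability, including collisions and rounding from the integer grid (which is negligible because $M=m^3$), is below $1/2$.

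\textbf{Space.} $\hat S$ holds $k$ numbers of $\log M=3\log m$ bits each, i.e.\ $O(\log m/\epsilon^2)$ bits.

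\textbf{Median amplification.} Each of the $t$ independent runs is good with probability $p>1/2$. Strictly, I should boost this to a constant such as $2/3$ by enlarging $C$. If the median falls outside the target interval, then at least half the runs failed. By a Chernoff bound this happens with probability $e^{-\Omega(t)}$, which is at most $\delta$ for $t=O(\log(1/\delta))$.

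\textbf{Main obstacle.} The delicate step is the precise constant bookkeeping in the two Chebyshev tails. The threshold $\tau$ must be an integer and the estimator uses $\max(\hat S)$ rather than a continuous order statistic. I would handle both by working with thresholds $\lfloor\cdot\rfloor$ and absorbing the $O(d/M)$ relative discretization error into $\epsilon/2$, using $d\le m$.
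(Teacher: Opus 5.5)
Your proof is correct and is essentially the standard argument of \citet{bjkst02}: a collision bound from $M=m^3$, Chebyshev on the number of hash values below the two thresholds, and the median trick. The paper itself gives no proof; it imports the result by citation, remarking only that a uniformly random $h$ may replace the pairwise-independent family, and that independence is exactly what lets you bound each count's variance by its mean. One constant should be tightened: the collision probability is at most $\binom{m}{2}/m^3<1/(2m)$, not merely $1/m$, and you need this sharper bound for the union bound to close below $1/2$ (or below $1/3$ after boosting) when $m$ is very small, e.g.\ $m=2$.
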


Note that the original sketch of \cite{bjkst02} was developed for streaming algorithms and uses a family of pairwise-independent hash functions rather than sampling $h$ uniformly at random. However, we do not need to store $h$ for our purposes, and we can therefore just pick a uniform at random function for simplicity. 

A key property of this sketch is that it is easily composable. In particular, given the sketches for multisets $S_1, \dots , S_n$ one can compute a sketch for their (multi)union $S = S_1\cup\dots \cup S_n$, that will be statistically equivalent to a sketch generated from scratch for $S$. We remark that \cite{bem17} used exactly the same idea using the sketch of \cite{cdim02} in their Appendix D, but limited to sets of size $k$. For completeness, we provide here a full proof. 

Most crucially for our application, storing all sketches for a collection $S_1, \dots , S_n$ one can compute the sketch for the set:
\[
    S_A = \bigcup_{a\in A} S_a
\]
for \emph{all} subsets $A\subseteq [n]$. The guarantees of the basic sketch yield that \emph{for each} choice of $A$, with probability at least $1-\delta'$ the sketch for $S_A$ will have small multiplicative error. Setting $\delta' = \Theta(\delta{2^{-n}})$ allows us to guarantee that, with probability at least $1-\delta$, \emph{for all} choices of $A$ the sketch will have smaller multiplicative error.

\UBCoverageUnweightedMultiplicative*
\begin{proof}
Consider any normalized coverage function $f\in \mathcal{C}_{n,m}$. We show how to construct a data structure that approximates $f(A)$ within a multiplicative error of $(1\pm \epsilon)$ for any $A\subseteq[n]$.

Let $\delta=2^{-(n+1)}$, $M=m^3$, and let $h_1, \dots, h_t$ be functions from $[m]$ to $[M]$ sampled i.i.d. uniformly at random, where $t=\Theta(\log(1/\delta))=\Theta(n)$ as required by \Cref{thm:f0-sketch}. Let $k=\Theta(1/\epsilon^2)$ as required by \Cref{thm:f0-sketch}. If the coverage function $f$ consists of a collection of sets $S_1, \dots, S_n\subseteq U$, then, we store, for each $i\in [n]$, and $j\in [t]$ the set:
\[
    \hat{S}_i^{(j)} = \min{}^{(k)}(\{h_j(s) \mid s \in S_i\}).
\]
Note that we do not store the functions $h_1, \dots, h_t$. Therefore, the total bit complexity of our data structure is at most: $n \cdot t \cdot k \cdot \log_2(M) = O( (n^2 / \epsilon^2) \cdot \log m )$.

Upon receiving a query $A \subseteq [n]$, we return the estimate:
\[
    \hat{f}(A) = \operatorname{median} \left\{ \hat{F}_0^{(k,M)}\left(\min{}^{(k)}\left(\bigcup_{a \in A}\hat{S}_{a}^{(j)}\right)\right)\right\}_{j\in [t]}.
\]
Observe that, for each $j\in [t]$, we have:
\[
    \min{}^{(k)}\left(\bigcup_{a \in A}\hat{S}_{aj}\right) = \min{}^{(k)}\left(\bigcup_{a \in A} \{h_j(s) \mid s \in S_a \}\right),
\]
therefore, $\hat{f}(A)$ has the same distribution as the estimate computed from scratch by the algorithm of \Cref{thm:f0-sketch} for the multiset $\cup_{a\in A} S_a$. Hence, with probability at least $1-\delta$:
\[
    (1-\varepsilon)\cdot m \cdot f(A) \le \hat{f}(A)\le (1+\varepsilon)\cdot m \cdot f(A),
\]
where we recall that $f(A)=\frac{1}{m}|\bigcup_{a\in A}S_a|$. In particular, by the union bound, with probability at least $1/2$ we have:
\[
    \forall A \subseteq [n] : \quad (1-\varepsilon)\cdot m \cdot f(A) \le \hat{f}(A)\le (1+\varepsilon)\cdot m \cdot f(A), 
\]
and the sketch will be correct as needed. 

Note that to sketch $\mathcal{C}_{n,m,\ell}$, it suffices to set $\delta=\frac{1}{2}\binom{n}{\ell}^{-1}$ and the rest follows as before.
\end{proof}

\section{Typical Set of Intersection Profiles}\label{sec:typical-set}
In this section, we substantiate the claim that sketching the intersection profile of most power-set distributions requires little memory.

Consider:
\[
    \mathcal{T}^{(n)}_{\varepsilon} := \left\{D \in \mathcal{D}_n \,\middle|\, \text{for all } S \in 2^{[n]}, {1\over 2^{|S|}}- \varepsilon \le 1-F_D(S)\le  {1\over 2^{|S|}} + \varepsilon\right\}.
\]
We say that a distribution $D \in \mathcal{D}_n$ is $\varepsilon$-\emph{typical} if it  belongs to $\mathcal{T}^{(n)}_\varepsilon$ (the $\varepsilon$-\emph{typical set}). We show that as $n\to\infty$ the probability that a uniformly random distribution $D$ is $\varepsilon$-typical tends to $1$.

\begin{theorem}\label{thm:typical-set-is-typical}
    For any $\varepsilon >0$:
    \begin{equation}\label{eq:typical-set-is-typical}
        \lim_{n\to \infty}\Pr_{D\sim U(D_n)} \left[ D \in \mathcal{T}_\varepsilon^{(n)}\right] = 1,
    \end{equation}
    where $U(D_n)$ is the uniform distribution over the $(2^{n}-1)$-dimensional simplex in $\mathbb{R}^{2^{n}}$. 
\end{theorem}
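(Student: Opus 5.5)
A uniform point on the simplex in $\mathbb{R}^{N}$, $N=2^n$, can be written as $D(T)=E_T/\sum_{U}E_U$, where the $E_U$, $U\subseteq[n]$, are i.i.d.\ $\mathrm{Exp}(1)$. Hence
\[
1-F_D(S)=\frac{\sum_{T\subseteq [n]\setminus S}E_T}{\sum_{U\subseteq[n]}E_U}.
\]
The plan is to show that, with probability tending to $1$, this ratio is within $\varepsilon$ of $2^{-|S|}$ simultaneously for all $S$. Write $Z=\sum_U E_U$ and, for $S$ with $|S|=s$, let $Z_S=\sum_{T\subseteq[n]\setminus S}E_T$. Then $Z_S$ is a sum of $2^{n-s}$ i.i.d.\ exponentials with mean $2^{n-s}$.

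First we handle the denominator. By a standard Chernoff bound for sums of exponentials (Gamma tail bounds), $|Z/2^n-1|\le \eta$ except with probability $2\exp(-c\eta^2 2^n)$.

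Next we split the sets $S$ by size. Let $s_0$ be such that $2^{-s_0}\le\varepsilon/2$, a constant depending only on $\varepsilon$.

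For \emph{large} $S$, with $|S|\ge s_0$, we do not need concentration at all. The ratio $Z_S/Z$ is monotone in $S$: if $S\supseteq S'$ then the subsets avoiding $S$ are among those avoiding $S'$. So it suffices to control the sets $S$ of size exactly $s_0$, since for larger $S$ we have $0\le 1-F_D(S)\le 1-F_D(S')$ for some $S'\subseteq S$ with $|S'|=s_0$. Also $2^{-|S|}\le\varepsilon/2$ in this range. So if every $S'$ of size $s_0$ satisfies $Z_{S'}/Z\le \varepsilon/2+\varepsilon/2$, then all large $S$ are fine.

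For \emph{small} $S$, with $|S|\le s_0$, there are at most $\sum_{s\le s_0}\binom{n}{s}\le (n+1)^{s_0}$ sets, which is polynomial in $n$. For each such $S$, the Gamma tail bound gives $|Z_S/2^{n-s}-1|\le\eta$ except with probability $2\exp(-c\eta^2 2^{n-s_0})$. A union bound gives total failure probability at most $(n+1)^{s_0}\cdot 2\exp(-c\eta^2 2^{n-s_0})\to 0$. On the good event, combined with the denominator bound,
\[
\frac{Z_S}{Z}\in 2^{-s}\left[\frac{1-\eta}{1+\eta},\frac{1+\eta}{1-\eta}\right],
\]
which is within $\varepsilon$ of $2^{-s}$ once $\eta$ is chosen small in terms of $\varepsilon$. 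The case $s=s_0$ is included here, and it gives the bound needed for the large sets.

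The main obstacle is that there are $2^n$ queries. A naive union bound over all of them fails, because for $|S|$ close to $n$ the numerator involves only a handful of exponentials and does not concentrate. The monotonicity reduction to $|S|=s_0$ is what removes this difficulty, since it leaves only polynomially many sets against a doubly exponential concentration bound. The only remaining care point is quoting a clean Gamma tail bound, for example via the moment generating function $(1-\lambda)^{-k}$.
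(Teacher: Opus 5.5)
Your argument is correct, but it handles the $2^n$ queries differently from the paper. The paper uses Dirichlet aggregation to get $F_D(S)\sim\Beta(\alpha,\beta)$ with $\alpha+\beta=2^n$. It then quotes the sharp sub-Gaussian variance proxy $1/(4(\alpha+\beta+1))$ for Beta variables, which does not depend on the mean. This gives an \emph{additive} tail bound $e^{-\varepsilon^2 2^{n+1}}$ for every $S$, whatever $|S|$ is. The paper therefore just takes a union bound over all $2^n$ sets and both tails, and obtains the explicit estimate $1-2e^{-2^{0.27n}}$ for $n\ge 20\log_2(1/\varepsilon)$. So your remark that a naive union bound must fail holds only for the \emph{relative}-error Gamma bounds you chose. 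Additively, $1-F_D(S)$ fluctuates on a scale of at most about $2^{-n/2}$, even when only a handful of exponentials are involved. Your route replaces that nontrivial optimal-constant result with elementary Chernoff bounds on Gamma sums, plus the monotonicity of $1-F_D$ under inclusion. Monotonicity reduces everything to the $(n+1)^{s_0}$ sets with $|S|\le s_0=O(\log(1/\varepsilon))$. This is more self-contained and makes it transparent that only small queries matter. The costs are handling the normalizer $Z$ separately and, with a single $\eta=\Theta(\varepsilon)$, an exponent of order $\varepsilon^3 2^n$ instead of $\varepsilon^2 2^{n}$. That loss is immaterial for the limit, since both bounds are doubly exponential in $n$.
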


\noindent
In fact, we will prove the following stronger, non-asymptotic version of \Cref{thm:typical-set-is-typical}.

\begin{theorem}\label{lem:non-asymptotic-typicality}
    For any $\varepsilon >0$, and any $n \ge 20\log_{2}{1\over \varepsilon}$:
    \[ 
        \Pr_{D\sim U(D_n)} \left[ D \in \mathcal{T}_\varepsilon^{(n)}\right] \ge {1- {2\over e^{2^{0.27 n}}}}.
    \]
\end{theorem}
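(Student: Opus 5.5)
We represent the uniform distribution on the simplex in $\mathbb{R}^{2^n}$ via normalized exponentials: let $\{X_T\}_{T\subseteq[n]}$ be i.i.d.\ $\mathrm{Exp}(1)$ and set $D(T)=X_T/\sum_{T'}X_{T'}$, which is exactly $\Dirichlet(1,\dots,1)$. Write $N=2^n$ and $Z=\sum_{T}X_T$. For a fixed $S$ we have
\[
1-F_D(S)=\frac{\sum_{T\cap S=\varnothing}X_T}{Z}=\frac{Z_S}{Z},
\]
where $Z_S$ is a sum of $N_S:=2^{n-|S|}$ i.i.d.\ exponentials. Our target is $Z_S/Z\approx N_S/N=2^{-|S|}$ within additive $\varepsilon$, simultaneously for all $S$.

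The first step controls the denominator once. By a Chernoff bound for sums of exponentials (the Gamma tail), $\Pr[|Z-N|>\eta N]\le 2e^{-c\eta^2N}$. We take $\eta$ a small power of $\varepsilon$, say $\eta=\varepsilon/4$. Since $n\ge 20\log_2(1/\varepsilon)$ gives $\varepsilon\ge 2^{-n/20}$, the exponent $\eta^2N$ is at least of order $2^{n-n/10}$, which is comfortably larger than $2^{0.27n}$.

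The second step splits the sets $S$ by size. For small sets, with $|S|\le s_0$, we apply the same Gamma tail bound to $Z_S$ and get $|Z_S-N_S|\le \eta N$ with failure probability at most $2\exp(-c\,\eta^2 N^2/N_S)$, which is at most $2\exp(-c\,\eta^2 N)$. On the complement of the failure events, both $Z$ and every $Z_S$ are near their means, so
\[
\left|\frac{Z_S}{Z}-\frac{N_S}{N}\right|\le \frac{\eta N+\eta N_S}{(1-\eta)N}\le\varepsilon.
\]
Large sets, with $|S|$ big enough that $2^{-|S|}\le\varepsilon/2$, are handled by monotonicity, with no union bound over them. If $S\supseteq S'$, then $1-F_D(S)\le 1-F_D(S')$. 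Choosing $S'\subseteq S$ of size exactly $s_0:=\lceil\log_2(2/\varepsilon)\rceil$, we get
\[
0\le 1-F_D(S)\le 1-F_D(S')\le 2^{-s_0}+\varepsilon/2\le\varepsilon.
\]
The lower side $2^{-|S|}-\varepsilon\le 1-F_D(S)$ is trivial because $2^{-|S|}-\varepsilon<0$. Consequently the union bound runs only over sets of size at most $s_0$, of which there are at most $n^{s_0}\le 2^{O(\log^2 n)}$ or so. More crudely, $\sum_{k\le s_0}\binom{n}{k}\le 2^{n}$ also works, since $e^{-c\eta^2 N}$ is doubly exponential and absorbs it.

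The last step is bookkeeping: verify that the total failure probability $(1+2^n)\cdot 2\exp(-c\,2^{0.9n}/16)$ is at most $2e^{-2^{0.27n}}$ under $n\ge 20\log_2(1/\varepsilon)$. The main obstacle is getting explicit constants in the exponential-sum Chernoff bound, for instance $\Pr[|\mathrm{Gamma}(k)-k|\ge t]\le 2e^{-t^2/(4(k+t))}$, so that the stated constants $2$ and $0.27$ come out. The slack between $0.9n$ and $0.27n$ leaves plenty of room to do this; if a tighter $\varepsilon$-dependence were needed, one could take $\eta=\varepsilon/4$ more carefully. Finally, \Cref{thm:typical-set-is-typical} follows immediately by letting $n\to\infty$ for fixed $\varepsilon$.
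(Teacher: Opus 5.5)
Your exponential representation is equivalent to the paper's Dirichlet aggregation: $Z_S/Z\sim\Beta(N_S,N-N_S)$ is exactly $\tilde{Y}_S$. The structure of your argument is also sound: concentration of $Z$ and of $Z_S$ for $|S|\le s_0$, plus monotonicity for $|S|\ge s_0$. It gives a doubly exponentially small failure probability for large $n$, and hence \Cref{thm:typical-set-is-typical}.

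The gap is the step you defer as bookkeeping. The statement is claimed for \emph{every} $n\ge 20\log_2(1/\varepsilon)$ with the explicit constants $2$ and $0.27$, and your method does not reach them for small $n$. Controlling $Z$ and $Z_S$ separately, and splitting $\varepsilon$ between them, loses a constant factor in the exponent. With $\eta=\varepsilon/4$ and your tail $2e^{-t^2/(4(k+t))}$, the bound for the denominator event $\{|Z-N|>\eta N\}$ alone is no better than $2e^{-\varepsilon^2 2^n/64}$. At the admissible boundary $\varepsilon=2^{-n/20}$ this is $2e^{-2^{0.9n}/64}$. That single term exceeds the whole allowed failure probability $2e^{-2^{0.27n}}$ whenever $2^{0.63n}<64$, i.e.\ for $n\le 9$. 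For example, at $n=8$, $\varepsilon\approx 0.76$, compare $2^{7.2}/64\approx 2.3$ with $2^{2.16}\approx 4.5$; sets with $|S|\ge 3$ do impose real constraints there. With your $(1+2^n)$ union factor, the inequality you plan to verify is false for every $n\le 11$. The slack between $2^{0.9n}$ and $2^{0.27n}$ absorbs constant factors only asymptotically, not uniformly in $n$.

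What is missing is a per-set tail bound with the sharp constant, $\Pr\bigl[\,|(1-F_D(S))-2^{-|S|}|\ge\varepsilon\,\bigr]\le 2e^{-2\varepsilon^2 2^n}$. The paper obtains it from the sub-Gaussian variance proxy $1/(4(2^n+1))$ of the Beta marginal.

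You can get the same bound inside your framework by not splitting numerator and denominator. Let $p=2^{-|S|}$, $q=p+\varepsilon$, and $Z_{\bar S}=Z-Z_S$, which is independent of $Z_S$. The event $\{Z_S\ge qZ\}$ equals $\{(1-q)Z_S-qZ_{\bar S}\ge 0\}$. Optimizing the Chernoff bound $\bigl(1-\lambda(1-q)\bigr)^{-N_S}(1+\lambda q)^{-(N-N_S)}$ over $\lambda$ gives exactly $e^{-N\,\mathrm{KL}(p\,\|\,q)}$, which is at most $e^{-2\varepsilon^2 N}$ by Pinsker. The lower tail is symmetric.

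Since $\varepsilon^2\ge 2^{-n/10}$ and $2^{0.9n+1}\ge n\ln 2+2^{0.27n}$ for all $n\ge 0$, a plain union bound over all $2^n$ sets and both tails then gives exactly $2e^{-2^{0.27n}}$. Your monotonicity reduction becomes unnecessary.

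A minor point: $n^{s_0}$ is not $2^{O(\log^2 n)}$ in general, since $s_0$ can be about $n/20$; only your fallback count $2^n$ is valid.
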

\begin{proof}
Recall that the pdf of a Dirichlet distribution with parameters $(a_1, \ldots, a_K)$ is given by
\[
    f(x_1, \ldots , x_K) = {\Gamma\left(\sum_{i=1}^Ka_i\right) \over \prod_{i=1}^K\Gamma(a_i)} \prod_{i=1}^K x_i^{a_i -1},
\]
where $\Gamma(\cdot)$ is the gamma function and $\sum_{i=1}^Kx_i = 1$ and $x_i\geq 0$ for each $i$.

We note that sampling a distribution $D$ from $U(D_n)$ is equivalent to sampling a vector $X = (X_A)_{A\subseteq [n]}$ from a flat Dirichlet distribution with $2^n$ parameters $(1, \dots ,1)$, where each $X_A$ indicates the probability of selecting set $A$ from $D$. 

Fix a subset $S \subseteq [n]$. We define $\tilde{X}_S :=\sum_{A\subseteq[n]\,:\, A\cap S \neq \varnothing} X_A$ and $\tilde{Y}_S:= \sum_{A\subseteq[n]\,:\, A\cap S = \varnothing} X_A$.

We recall the following standard result (see, e.g.,\ \cite{f73}):
\begin{proposition}[Aggregation Property of Dirichlet Random Variables]
    Let $X = (X_1, \dots , X_K) \sim \Dirichlet(a_1, \ldots,a_K)$. For any $i< j \in[K]$, let $X^{(i\leftrightarrow j)}$ be the vector obtained from $X$ by removing $X_j$ and replacing $X_i$ with $X_i + X_j$, we then have:
    \[
        X^{(i\leftrightarrow j)} \sim \Dirichlet (a_1, \dots , a_{i-1},a_i+a_j, a_{i+1}, \dots, a_{j-1}, a_{j+1}, \ldots, a_K).
    \]
\end{proposition}

By the aggregation property of the Dirichlet distribution, we have:
\[
    \left(\tilde{X}_S, \tilde{Y}_S\right) \sim \Dirichlet (\alpha,\beta),
\]
where $\alpha = |\{A\in2^{[n]}\mid A\cap S \neq \varnothing\}|$ and $\beta = |\{A\in2^{[n]}\mid A\cap S = \varnothing\}|$. The marginal of a Dirichlet distribution follows a beta distribution (again, see, e.g.,\ \cite{f73}):
\[
   \tilde{X}_S \sim \Beta\left(\alpha,\beta\right).
\]
The latter is sub-Gaussian with a proxy-variance upper bounded by $\widetilde{\sigma}^{2} = {1\over 4(\alpha + \beta + 1)} \le {1\over 2^{n+2}}$~\citep[Theorem 2.1]{ma17}, meaning it satisfies:
\[
    \Pr[\tilde{X}_S - \mu_S \ge \varepsilon] \le e^{-{\varepsilon^2\over 2\widetilde{\sigma}^2}} \le e^{-\varepsilon^2 2^{n+1}},
\]
where $\mu_S := 1-{1\over 2^{|S|}}$ is the mean of $\tilde{X}_S$, since $\E{}{[\tilde{X}_S]}=\frac{\alpha}{\alpha+\beta}$.
In particular, if $n \ge 20\log_2 {1\over \varepsilon}$, we have:
\[
    \Pr[\tilde{X}_S - \mu_S \ge \varepsilon] \le \operatorname{exp}\left({-2^{0.9n+1}}\right) \le \operatorname{exp}\left({-n\ln(2) - 2^{ 0.27n}}\right) = 2^{-n} e^{-2^{0.27 n}}.
\]
Applying a union bound, gives:
\[
    \Pr[\exists S \subseteq [n] : \tilde{X}_S - \mu_S \geq \varepsilon ] \le e^{-2^{0.27n}}.
\]
By considering the variables $\{\tilde{Y}_S\}_{S\subseteq [n]}$ and applying the same argument as above, since we have $\tilde{Y}_S=1-\tilde{X}_S$, we obtain:
\[
    \Pr[\exists S \subseteq [n] : \mu_S - \tilde{X}_S \geq \varepsilon ] \le e^{-2^{0.27n}},
\]
and hence:
\[
    \Pr[\exists S \subseteq [n] : |\tilde{X}_S - \mu_S| \geq \varepsilon ] \le 2e^{-2^{0.27n}}.
\]
Noting that $\tilde{X}_S = F_D(S)$ completes the proof of \Cref{lem:non-asymptotic-typicality} and hence that of \Cref{thm:typical-set-is-typical}.
\end{proof}

\end{document}